\title{Computing Voronoi Diagrams in the Polar-Coordinate Model of the
  Hyperbolic Plane} 
\titlerunning{Voronoi Diagrams in the Polar-Coordinate Model of the
  Hyperbolic Plane} 
\author{Tobias Friedrich}{Hasso Plattner Institute, University of Potsdam\\{Potsdam, Germany}}{tobias.friedrich@hpi.de}{https://orcid.org/0000-0003-0076-6308}{}
\author{Maximilian Katzmann}{Karlsruhe Institute of Technology\\{Karlsruhe, Germany}}{maximilian.katzmann@kit.edu}{https://orcid.org/0000-0002-9302-5527}{}
\author{Leon Schiller}{Hasso Plattner Institute, University of Potsdam\\{Potsdam, Germany}}{leon.schiller@student.hpi.de}{}{}
\authorrunning{T. Friedrich, M.Katzmann, L. Schiller} 
\keywords{Voronoi diagram, hyperbolic geometry, sweep line algorithm,
  delaunay complex, hyperbolic random
  graphs} 
\tikzset{
        partial ellipse/.style args={#1:#2:#3}{
            insert path={+ (#1:#3) arc (#1:#2:#3)}
        }
    }
\DeclareMathOperator{\acos}{acos}
\DeclareMathOperator{\asinh}{asinh}
\DeclareMathOperator{\acosh}{acosh}
\DeclareMathOperator{\atanh}{atanh}
\DeclareMathOperator{\atan}{atan}
\newcommand{\Vor}[1]{\mathrm{Vor}(#1)}
\newcommand{\Hplane}{\mathbb{H}^2}
\newcommand{\Set}[1]{\mathcal{#1}}
\newcommand{\Line}[2]{\overleftrightarrow{#1 #2}}
\newcommand{\Ray}[2]{\overrightarrow{#1 #2}}
\newcommand{\SingleRay}[1]{\overrightarrow{#1}}
\newcommand{\LineSegment}[2]{\overline{#1 #2}}
\newcommand{\Dist}[2]{|\LineSegment{#1}{#2}|}
\newcommand{\Bisector}[2]{#1\!\perp\!#2}
\newcommand{\Intersection}[2]{#1 \cap #2}
\newcommand{\SweepRadius}{\hat{r}}
\newcommand{\Active}[1]{\Set{A}(#1)}
\newcommand{\SitesInSweep}{\Set{S}_{\le\SweepRadius}}
\newcommand{\Triangle}[3]{\bigtriangleup #1 #2 #3}
\newcommand{\Angle}[3]{\sphericalangle #1 #2 #3}
\newcommand{\Ellipse}[1]{%
  \begin{tikzpicture}%
    \draw[rotate=66] (0, 0) ellipse (0.835ex and 0.45ex);%
    \draw[fill=black] (0.175ex, 0.45ex) circle (0.1ex);%
  \end{tikzpicture}%
  #1
}
\newcommand{\SmallEllipse}{%
  \!\begin{tikzpicture}%
    \draw[fill=white] (0.1, -0.075) circle (0.0ex);%
    \draw[rotate=66] (-0.15, -0.16) ellipse (0.575ex and 0.33ex);%
    \draw[fill=black] (0.098, -0.17) circle (0.075ex);%
  \end{tikzpicture}%
}
\newcommand{\EllipseFunction}[2]{r_{\SmallEllipse #1}(#2)}
\newcommand{\Circle}[1]{%
  \begin{tikzpicture}%
    \draw (0, 0) circle (0.805ex);%
    \draw[fill=black] (0, 0) circle (0.1ex);%
  \end{tikzpicture}%
  \hspace{0.75pt} #1
}
\newcommand{\CC}{C\nolinebreak\hspace{-.05em}\raisebox{.4ex}{\tiny\bf +}\nolinebreak\hspace{-.10em}\raisebox{.4ex}{\tiny\bf +}}
\def\CC{{C\nolinebreak[4]\hspace{-.05em}\raisebox{.4ex}{\tiny\bf ++}}}
\begin{document}

\maketitle

\begin{abstract}
  A Voronoi diagram is a basic geometric structure that partitions the
  space into regions associated with a given set of sites, such that
  all points in a region are closer to the corresponding site than to
  all other sites.
  While being thoroughly studied in Euclidean space, they are also of
  interest in hyperbolic space.
  In fact, there are several algorithms for computing hyperbolic
  Voronoi diagrams that work with the various models used to describe
  hyperbolic geometry.
  However, the \emph{polar-coordinate model} has not been considered
  before, despite its popularity in the network science community.
  While Voronoi diagrams have the potential to advance this field, the
  model is geometrically not as approachable as other models, which
  impedes the development of geometric algorithms.

  In this paper, we present an algorithm for computing Voronoi
  diagrams natively in the polar-coordinate model of the hyperbolic
  plane.
  The approach is based on Fortune's sweep line algorithm for
  Euclidean Voronoi diagrams.  We characterize the hyperbolic
  counterparts of the concepts it utilizes and introduce adaptations
  necessary to account for the differences.  
  
  We implemented our algorithm and compared it with the corresponding
  CGAL implementation.  While not being as numerically stable, our
  method has proven to be useful as a reference, which helped
  resolving fundamental issues in the implementation of the
  state-of-the-art method.
\end{abstract}

\newpage

\section{Introduction}

The Voronoi diagram is a fundamental and well-studied geometric
structure.  Given a set of points, which we call \emph{sites}, the
goal is to partition the space into \emph{cells}, i.e., regions that
are associated with the sites, such that no site is closer to a point
in a cell than the associated one. \todo{Unhide LIPICs for
  submission.} \todo{Include related version again.} \todo{Include
  line numbers again.}
Typically, the sites are assumed to lie in Euclidean space, where
finding the Voronoi diagram or its geometric dual, the Delaunay
triangulation, has various applications in biology, computer graphics,
and robotics~\cite{Bock_Tyagi_Kreft_Alt_2010, emp-tmpa-02,
  gma-ppmrn-06}.
However, this problem is also relevant in hyperbolic geometry, in the
context of finding certain Möbius transformations~\cite{be-omtiv-01},
computing Delaunay triangulations of points in two
planes~\cite{Boissonnat_Devillers_Teillaud_1996}, or for greedy
routing in networks~\cite{tim-rhvdt-11}.

Hyperbolic and Euclidean geometry differ in several fundamental
properties.  First, space expands exponentially fast in hyperbolic
space, while the expansion is only polynomial in Euclidean space.
Second, for a hyperbolic line there are infinitely many lines that go
through a point not on the line and are parallel to the first one.
And, third, in contrast to the flat Euclidean space, hyperbolic space
is negatively curved.

Over time, various models have been developed to capture these
properties and facilitate the study of hyperbolic geometry.  While
some models, like the \emph{Poincaré ball}, the \emph{Klein ball}, and
the \emph{hemisphere model} represent the infinite hyperbolic space in
a finite region in Euclidean space, other models like the
\emph{upper-half-plane model} and the \emph{hyperboloid model} map it
onto unbounded Euclidean regions.  For an overview we refer the reader
to~\cite[Chapter 7]{rr-ihg-95}.

With the different models came different approaches to computing
Voronoi diagrams.  For the upper-half-plane model there is an
algorithm that computes it from a Euclidean one~\cite{ot-cvdu-96}.  A
similar approach was later used to generalize an
$\mathcal{O}(n^2)$-algorithm in the Poincaré
disk~\cite{Nilforoushan_Mohades_2006}, to an
$\mathcal{O}(n \log(n))$-algorithm in the Poincaré
ball~\cite{Bogdanov_Devillers_Teillaud_2014}.  Another method is based
on power diagrams and utilizes the Klein disk
model~\cite{Nielsen_Nock_2010}. We refer the reader
to~\cite{tim-rhvdt-11} for an overview of hyperbolic Voronoi diagrams,
to~\cite{n-vdigc-20} for their relation to hyperbolic Delaunay
complexes, and to~\cite{nn-vhfd-14} for visualizations in different
models.  We note that converting between the various representations
typically involves scaling the coordinates exponentially, which can
lead to numerical issues.  Consequently, it is best to perform all
computations directly in one model, instead of taking detours through
other representations.

A model that has not been considered in the context of hyperbolic
Voronoi diagrams before is the \emph{polar-coordinate model}, which is
rather surprising given its popularity in the study of complex
networks.
There, the polar-coordinate representation is used to analyze
real-world networks like protein-interaction networks, the internet,
and trade networks~\cite{ama-lghpi-18,
  Boguna_Papadopoulos_Krioukov_2010, gbas-whhgit-16}.  In particular,
\emph{hyperbolic random graphs}, a generative graph model aimed at
representing such networks, were introduced using the polar-coordinate
model of the hyperbolic plane~\cite{kpf-hgcn-10}, and it was since
shown that this representation is particularly well-suited for
mathematical analysis of network properties and
algorithms~\cite{bff-svcpt-21, bff-espsn-18,
  Gugelmann_Panagiotou_Peter_2012, k-sahrg-16, ms-k-19}.  Being able
to compute Voronoi diagrams in this model has the potential to further
advance this field.

One reason for the mathematical accessibility of the polar-coordinate
model is its straightforward definition.  Points in hyperbolic space
are addressed using their distance to a dedicated point, called
\emph{pole}, and the angular distance to a ray starting at the pole,
called \emph{polar axis}.  The resulting coordinates are then simply
interpreted as polar coordinates in Euclidean space.
However, in contrast to models like the Poincaré disk, where
hyperbolic circles are Euclidean circles (with offset centers) and
hyperbolic lines are circular arcs in Euclidean space, or the
hyperboloid model, where hyperbolic circles and lines are
intersections of planes with the hyperboloid, the polar-coordinate
model is not as approachable from a geometric point of view.  There,
hyperbolic circles are shaped like tear drops and lines are hyperbolas
that are bent towards the pole (see Figure~\ref{fig:model}~(left) for
an illustration).  As a consequence, this model has not been
considered in the context of geometric algorithms before.

In this paper, we present an algorithm for computing Voronoi diagrams
directly in the polar-coordinate model of the two-dimensional
hyperbolic plane.
The approach is based on Fortune's sweep line algorithm for solving
the problem in the Euclidean plane~\cite{Fortune_1986}, which scans
the plane and maintains data structures to incrementally build the
diagram using a so-called \emph{beach line}.  More precisely, our
approach builds on a generalization of Fortune's algorithm that uses
an expanding sweep circle instead of a
line~\cite{Xin_Wang_Xia_Mueller_Wittig_Wang_He_2013}, while
maintaining an optimal running time of $\mathcal{O}(n \log(n))$.

The resulting algorithm is simple and while proving its correctness
follows known techniques, we note that translating them to work with
the considered model of hyperbolic geometry is not trivial.  In
contrast, prior work nicely demonstrates that other models allow for
utilizing parallels to Euclidean geometry (see,
e.g.,~\cite{Bogdanov_Devillers_Teillaud_2014, ot-cvdu-96}), yet the
resulting algorithms are more complicated and thus harder to
implement.


We implemented our algorithm and compared it to the state-of-the-art
method for computing hyperbolic Voronoi diagrams and Delaunay
complexes~\cite{Bogdanov_Devillers_Teillaud_2014}, which is available
in CGAL~\cite{bit-hdt-22}.  Since this method works with the Poincaré
disk model of the hyperbolic plane, utilizing it for our purposes
requires converting between the Poincaré disk and the polar-coordinate
model.  Surprisingly, our results show that the computation of the
Voronoi vertices is more reliable than in our native approach, despite
these conversions.  However, further comparisons with our method
showed that the CGAL implementation was not able to reliably compute
the correct structure of the diagram.  In fact, adjacent Voronoi cells
were often not detected as such (corresponding to missing edges in the
Delaunay triangulation).  After reporting the
issue\footnote{\url{https://github.com/CGAL/cgal/issues/6869}}, it
turned out that solving it required a sizable change in the
implementation\footnote{\url{https://github.com/MaelRL/cgal/commit/cf12f90cbf721fbbf6752d98c6d76190b6a5052c}},
which highlights the usefulness of reference implementations that are
based on simple algorithms.  Beyond that, we observe that both methods
suffer from numerical inaccuracies.  Thus, while our method represents
the first step towards utilizing geometric algorithms in the
polar-coordinate model of the hyperbolic plane, further research is
necessary to obtain scalable methods.


\todo{Adjust for submission} We note that proofs and in particular the
proof of correctness and running time analysis of our algorithm are
deferred to Appendix~\ref{apx:missing-proofs}.


\section{Preliminaries}
\label{sec:preliminaries}

\begin{figure}[t]
  \centering
  \includegraphics{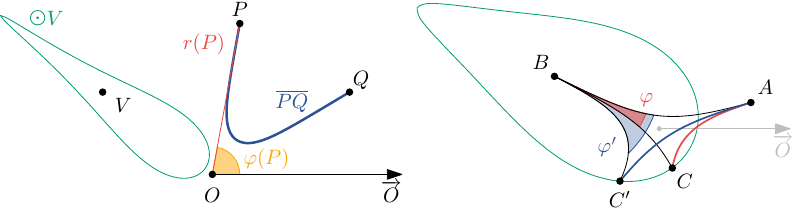}
  \caption{\textbf{(Left)} The polar-coordinate model of the
    hyperbolic plane.  The line segment $\LineSegment{P}{Q}$ is shown
    in blue.  Point $V$ is the center of the tear-drop shaped green
    circle $\Circle{V}$.  \textbf{(Right)} Illustration of the
    triangles $\Triangle{A}{B}{C}$ and $\Triangle{A}{B}{C'}$ from
    Lemma~\ref{lem:angle-affects-distance}.  Since $C$ and $C'$ lie on
    a circle (green) with center $B$, we have
    $\Dist{B}{C} = \Dist{B}{C'}$.  When increasing the angle $\varphi$
    at $B$ to $\varphi'$ then $\Dist{A}{C} < \Dist{A}{C'}$.}
  \label{fig:model}
\end{figure}

We denote points with capital letters like $P, Q$ and sets or tuples
with calligraphic capital letters like $\Set{S}, \Set{V}$.  Given two
points $P, Q$, we denote the line through them with $\Line{P}{Q}$ and
the ray from $P$ through $Q$ with $\Ray{P}{Q}$.  A ray starting at
$P$, whose direction is given by context is denoted by
$\SingleRay{P}$.  The line segment between $P$ and $Q$ is written as
$\LineSegment{P}{Q}$ and $\Dist{P}{Q}$ denotes its length, i.e., the
distance between $P$ and $Q$.  The \emph{perpendicular bisector},
i.e., the set of points with equal distance to $P$ and $Q$ is denoted
by $\Bisector{P}{Q}$.  A \emph{circle} with center $P$, i.e., the set
of all points $Q$ with equal distance to $P$ is denoted by
$\Circle{P}$ (its radius is given by context).  The points $Q$ are
said to lie on the \emph{arc} of $\Circle{P}$.  Triangles are written
as $\Triangle{P}{Q}{S}$ and $\Angle{P}{Q}{S}$ denotes the angle
between $\Ray{Q}{P}$ and $\Ray{Q}{S}$ measured in \emph{clockwise}
direction around~$Q$.  The intersection of two objects is denoted with
a $\Intersection{}{}$-sign.

\subparagraph{The Hyperbolic Plane.}  In this paper, we work with the
\emph{polar-coordinate model} of the hyperbolic plane $\Hplane$.
There, points are identified using polar coordinates.  After defining
a designated \emph{origin} or \emph{pole} $O \in \Hplane$ together
with a \emph{polar axis} $\SingleRay{O}$, i.e., a reference ray
starting at $O$, a point $P \in \Hplane$ is identified using its
\emph{radius} $r(P)$ denoting the hyperbolic distance to $O$ and its
\emph{angle} $\varphi(P)$ denoting the angular distance to
$\SingleRay{O}$ in counterclockwise direction around $O$.  For
visualizations, these coordinates are then interpreted as polar
coordinates in the Euclidean plane, as shown in
Figure~\Ref{fig:model}.  We note that angles at line intersections are
\emph{not} preserved by the representation and are only added in our
figures for illustration purposes.

In contrast to Euclidean geometry, the sum of the angles in a triangle
$\Triangle{A}{B}{C}$ is strictly less than $\pi$ in the hyperbolic
plane.  Still, we can relate the length of $\LineSegment{A}{B}$ to the
lengths of the other segments and the angle $\varphi$ at $C$, via the
\emph{hyperbolic law of cosines}, which is defined as
\begin{align}
  \label{eq:law-of-cosines}
  \cosh(\Dist{A}{B}) = \cosh(\Dist{A}{C}) \cosh(\Dist{B}{C}) - \sinh(\Dist{A}{C}) \sinh(\Dist{B}{C}) \cos(\varphi),
\end{align}
where $\sinh(x) = (e^{x} - e^{-x}) / 2$,
$\cosh(x) = (e^{x} + e^{-x}) / 2$.  Analogous to Euclidean geometry,
the hyperbolic tangent $\tanh$ is defined as the quotient of the
hyperbolic sine and cosine.  We denote the inverse hyperbolic
trigonometric functions with $\asinh$, $\acosh$, and $\atanh$.  Note
that we can use the hyperbolic law of cosines to derive the distance
between $A, B \in \Hplane$ by considering the triangle
$\Triangle{A}{O}{B}$, where the angle at $O$ is
$\varphi = \Delta_{\varphi}(A, B) = \pi - |\pi - |\varphi(A) -
\varphi(B)||$.  Since $\cos(\pi - x) = -\cos(x)$ and
$\cos(-x) = \cos(x)$, we have
$\cos(\Delta_{\varphi}(A, B)) = \cos(\varphi(A) - \varphi(B))$.  Thus,
the hyperbolic distance between $A$ and $B$ is given by
\begin{align}
  \label{eq:hyperbolic-distance}
  \Dist{A}{B} &= \acosh \big( \cosh(r(A)) \cosh(r(B)) - \sinh(r(A)) \sinh(r(B)) \cos(\varphi(A) - \varphi(B)) \big).
\end{align}
The distance between a circle of radius $r$ centered at the pole and a
point $A$ with radius $r(A) \le r$ is given by $r - r(A)$.  The
following lemma shows that changing an angle in a triangle also
changes the length of the opposite line segment, as illustrated in
Figure~\ref{fig:model}~(right).

\wormhole{lem-angle-affects-distance}
\begin{lemma}
  \label{lem:angle-affects-distance}
  Let $\Triangle{A}{B}{C}$ and $\Triangle{A}{B}{C'}$ be triangles with
  $\Dist{B}{C} = \Dist{B}{C'}$ and angles $\varphi$ and~$\varphi'$ at
  $B$, respectively.  Then, $\varphi\!<\!\varphi'$ (resp.
  $\varphi\!>\!\varphi'$) if and only if $\Dist{A}{C} < \Dist{A}{C'}$
  (resp.  $\Dist{A}{C} > \Dist{A}{C'}$).
\end{lemma}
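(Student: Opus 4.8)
The plan is to reduce the statement to a monotonicity argument via the hyperbolic law of cosines~\eqref{eq:law-of-cosines}. Both triangles share the side $\LineSegment{A}{B}$, and by assumption the sides $\LineSegment{B}{C}$ and $\LineSegment{B}{C'}$ have a common length, say $d \coloneqq \Dist{B}{C} = \Dist{B}{C'}$. In each triangle the angle in question sits at $B$ and is opposite the side $\LineSegment{A}{C}$ (resp.\ $\LineSegment{A}{C'}$). First I would therefore rewrite~\eqref{eq:law-of-cosines} so that the isolated cosine is the one at $B$, yielding
\[
  \cosh(\Dist{A}{C}) = \cosh(\Dist{A}{B}) \cosh(d) - \sinh(\Dist{A}{B}) \sinh(d) \cos(\varphi),
\]
together with the analogous identity for the primed triangle, with $\varphi'$ in place of $\varphi$.

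Subtracting the two identities, the first two summands cancel because they do not depend on the angle, and I obtain
\[
  \cosh(\Dist{A}{C}) - \cosh(\Dist{A}{C'}) = \sinh(\Dist{A}{B}) \sinh(d) \big( \cos(\varphi') - \cos(\varphi) \big).
\]
Since $A$, $B$, $C$ (and $C'$) form genuine triangles, the lengths $\Dist{A}{B}$ and $d$ are strictly positive, so the coefficient $\sinh(\Dist{A}{B}) \sinh(d)$ is strictly positive. Hence the sign of $\cosh(\Dist{A}{C}) - \cosh(\Dist{A}{C'})$ coincides with the sign of $\cos(\varphi') - \cos(\varphi)$.

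It then remains to chain the standard monotonicities. The interior angles of a triangle lie in $(0, \pi)$, where $\cos$ is strictly decreasing, so $\varphi < \varphi'$ is equivalent to $\cos(\varphi') < \cos(\varphi)$, i.e.\ to $\cosh(\Dist{A}{C}) < \cosh(\Dist{A}{C'})$. Because $\cosh$ is strictly increasing on $[0, \infty)$ and distances are nonnegative, this is in turn equivalent to $\Dist{A}{C} < \Dist{A}{C'}$. Reading the same chain with all inequalities reversed settles the $\varphi > \varphi'$ case, which completes the proof.

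I expect the only delicate points to be bookkeeping rather than substance: namely (i) applying~\eqref{eq:law-of-cosines} in the form that isolates the cosine at the correct vertex $B$, so that the shared side $\LineSegment{A}{B}$ and the equal side of length $d$ play the role of the two enclosing edges; and (ii) justifying the strict positivity of $\sinh(\Dist{A}{B}) \sinh(d)$, which relies on the triangles being nondegenerate. Once these are in place, the equivalence follows at once from the strict monotonicity of $\cos$ on $(0, \pi)$ and of $\cosh$ on $[0, \infty)$.
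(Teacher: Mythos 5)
Your proposal is correct and follows essentially the same route as the paper: both rest on the hyperbolic law of cosines applied at the vertex $B$ together with the strict monotonicity of $\cos$ on the range of interior angles and of $\cosh$ on $[0,\infty)$. The only cosmetic difference is that you subtract the two identities to get both directions of the equivalence in a single chain, whereas the paper proves the two implications separately via $\acosh$ and $\acos$; the substance is identical.
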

We note that in a degenerate triangle where the angle at $B$ is $\pi$,
the length of the segment opposite of $B$ is given by the sum of the
lengths of the other two segments.  Thus, the above lemma confirms the
fact that the triangle inequality also holds in the hyperbolic plane.
The following lemma shows that, if two points $B, C$ have certain
distances to a third point $A$, then every distance between the two
distances is realized by a point on $\LineSegment{B}{C}$.

\wormhole{lem-distance-continuous}
\begin{lemma}
  \label{lem:distance-continuous}
  Let $\Triangle{A}{B}{C}$ be a triangle with $\Dist{A}{B} \le
  \Dist{A}{C}$.  For every $d \in [\Dist{A}{B}, \Dist{A}{C}]$, there
  exists a point $P \in \LineSegment{B}{C}$ with $\Dist{A}{P} = d$.
\end{lemma}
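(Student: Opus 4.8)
The plan is to reduce the statement to the intermediate value theorem applied to the distance from $A$ to a point that moves along the segment $\LineSegment{B}{C}$. First I would parametrize the segment by arc length measured from $B$: for each $t \in [0, \Dist{B}{C}]$ let $P(t)$ be the unique point of $\LineSegment{B}{C}$ with $\Dist{B}{P(t)} = t$, so that $P(0) = B$, $P(\Dist{B}{C}) = C$, and $P(t)$ traverses the segment continuously as $t$ grows. The key observation is that every such $P(t)$ with $t > 0$ lies on the ray $\Ray{B}{C}$, hence the angle at $B$ in the triangle $\Triangle{A}{B}{P(t)}$ equals the angle $\varphi := \Angle{A}{B}{C}$ of the original triangle and does not change with $t$.

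With the angle at $B$ fixed, I would apply the hyperbolic law of cosines~\eqref{eq:law-of-cosines} to the triangle $\Triangle{A}{B}{P(t)}$ to obtain
\begin{align*}
  \cosh(\Dist{A}{P(t)}) = \cosh(\Dist{A}{B}) \cosh(t) - \sinh(\Dist{A}{B}) \sinh(t) \cos(\varphi),
\end{align*}
so that $f(t) := \Dist{A}{P(t)} = \acosh\big(\cosh(\Dist{A}{B})\cosh(t) - \sinh(\Dist{A}{B})\sinh(t)\cos(\varphi)\big)$. Since $\cosh$, $\sinh$, and $\acosh$ are continuous, $f$ is a continuous function of $t$ on $[0, \Dist{B}{C}]$. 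Substituting the endpoints gives $f(0) = \acosh(\cosh(\Dist{A}{B})) = \Dist{A}{B}$ and $f(\Dist{B}{C}) = \Dist{A}{C}$, the latter being exactly the law of cosines applied to the full triangle $\Triangle{A}{B}{C}$.

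Finally, given any target $d \in [\Dist{A}{B}, \Dist{A}{C}]$, I would invoke the intermediate value theorem: because $f$ is continuous with $f(0) = \Dist{A}{B} \le d \le \Dist{A}{C} = f(\Dist{B}{C})$, there is some $t^{\ast} \in [0, \Dist{B}{C}]$ with $f(t^{\ast}) = d$, and $P := P(t^{\ast})$ is the desired point on $\LineSegment{B}{C}$. Note that monotonicity of $f$ is \emph{not} needed: the perpendicular foot from $A$ may well fall between $B$ and $C$, in which case $f$ first decreases and then increases, but the intermediate value theorem still guarantees that every value between the two endpoint distances is attained. I expect the only genuinely delicate point to be the setup of the arc-length parametrization $P(t)$, together with the claim that it sweeps the segment continuously while keeping the angle at $B$ constant; once that is in place, the remainder is a direct substitution into the law of cosines combined with a continuity argument.
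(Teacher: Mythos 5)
Your proof is correct, but it takes a more explicit route than the paper. The paper's argument is purely geometric: for $d$ strictly between $\Dist{A}{B}$ and $\Dist{A}{C}$ it considers the circle of radius $d$ centered at $A$, observes that $B$ lies inside and $C$ outside, and concludes that the segment $\LineSegment{B}{C}$ must intersect the circle; the endpoint cases are handled separately by taking $P = B$ or $P = C$. That proof is shorter but silently relies on the topological fact that a segment running from the interior to the exterior of a circle crosses its arc -- which is itself an intermediate-value statement in disguise. You instead make the continuity explicit: parametrizing $\LineSegment{B}{C}$ by arc length from $B$, noting that the angle at $B$ is constant along the ray $\Ray{B}{C}$, and reading off $f(t) = \acosh\bigl(\cosh(\Dist{A}{B})\cosh(t) - \sinh(\Dist{A}{B})\sinh(t)\cos(\varphi)\bigr)$ from the hyperbolic law of cosines~\eqref{eq:law-of-cosines} gives a manifestly continuous function with the right endpoint values, so the intermediate value theorem applies directly and uniformly (no separate endpoint cases, no appeal to an unproven crossing fact). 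Your version is slightly longer but more self-contained; you are also right that monotonicity of $f$ is irrelevant, and your remark about the foot of the perpendicular possibly lying between $B$ and $C$ correctly identifies why one should not claim it. Both arguments are sound; yours buys rigor at the cost of a formula, the paper's buys brevity at the cost of an implicit geometric appeal.
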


\subsection{Hyperbolic Voronoi Diagram}
\label{sec:prelim-voronoi}

The hyperbolic Voronoi diagram is defined analogously to the Euclidean
version~\cite[Chapter~7]{bcko-cg-08}.  Let~$\Set{S}$ be a set of
finitely many points in $\Hplane$.  For each \emph{site}
$S \in \Set{S}$ we define the \emph{Voronoi cell} $\Vor{S}$ as the set
of points $P \in \Hplane$ such that no site in $\Set{S}$ is closer to
$P$ than~$S$.  The \emph{Voronoi diagram} $\Vor{\Set{S}}$ is the
partition of $\Hplane$ into the Voronoi cells.  The diagram is
formalized by the set $\Set{E}$ of \emph{Voronoi edges} denoting the
borders between cells, together with the set $\Set{V}$ of
\emph{Voronoi vertices} denoting the endpoints of the edges.  Note
that, by definition, the Voronoi edge between two sites $S_1 \neq S_2$
is a subset of the bisector $\Bisector{S_1}{S_2}$.  Further note that
$\Set{V}$ contains all points $V \in \Hplane$ that have three closest
sites, i.e., $V \in \Set{V}$ if and only if it is the center of an
empty circle $\Circle{V}$ with at least three sites
$S_1, S_2, S_3, \ldots \in \Set{S}$ on its
arc~\cite{Bogdanov_Devillers_Teillaud_2014}.  See
Figure~\ref{fig:hyperbolic-voronoi-prelims} for an illustration.  We
call this circle the \emph{witness circle of $V$}.  In the context of
hyperbolic Voronoi diagrams, we are often interested in the point
$V' \in \Circle{V}$, denoted as the \emph{far point of $V$}, that is
farthest from the pole, i.e., the point with the largest radial
coordinate among points on the witness circle of $V$.  We call the
sites $S_1, S_2, S_3, \dots$ the \emph{incident sites} of $V$ and
refer to the tuple containing the incident sites in the order one
encounters them when traversing $\Circle{V}$ in clockwise direction
starting at the far point $V'$, as the \emph{incidence tuple of $V$}.

\begin{figure}[t]
  \centering
  \includegraphics{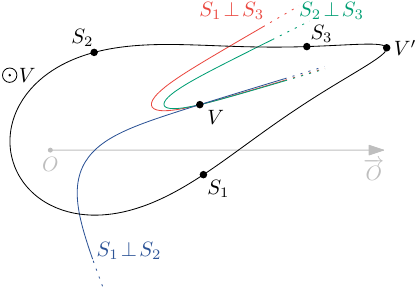}
  \caption{A point $V$ with three sites $S_1, S_2$, and $S_3$ and the
    far point $V'$ on the arc of its witness circle $\Circle{V}$.  The
    perpendicular bisectors of the sites intersect at $V$.}
  \label{fig:hyperbolic-voronoi-prelims}
\end{figure}

\subsection{Fortune's Algorithm}
\label{sec:fortunes-algorithm}

\subparagraph{Euclidean Sweep Line.} Given a set $\Set{S}$ of sites in
the Euclidean plane, Fortune's algorithm~\cite{Fortune_1986} computes
the Voronoi diagram $\Vor{\Set{S}}$ using a \emph{sweep line} that can
be thought of as being parallel to the $x$-axis and traversing the
plane in negative $y$-direction (Figure~\ref{fig:foruntes-algorithm}
(left)).  At any time all sites above the sweep line are incorporated
into the diagram.  The sites below the sweep line have not been seen,
yet, and may affect the existing diagram, at least within a certain
region above the sweep line.  The boundary that separates this region
from the completed part of the diagram is called the \emph{beach
  line}.  It consists of parts of \emph{beach parabolas}, which
contain the points that lie at equal distance between a site and the
sweep line.  The intersections of neighboring beach parabolas trace
the edges of the Voronoi diagram.  Maintaining the beach line is the
crucial part of the algorithm.  Of course, one cannot perform a
continuous sweep motion in a computer program.  Instead the algorithm
makes use of the fact that the beach line only changes at certain
sweep line positions, called \emph{site events} and \emph{circle
  events}.  Site events occur when the sweep line reaches a site and a
new beach parabola is added to the beach line.  A circle event occurs
when the sweep line reaches the lowest point of a circle containing
three sites whose beach parabolas are consecutive on the beach line.
There a parabola is potentially removed from the beach line and a
Voronoi vertex is detected.  For a more detailed description, we refer
the reader to~\cite[Section 7.2]{bcko-cg-08}.

\begin{figure}[t]
  \centering
  \includegraphics{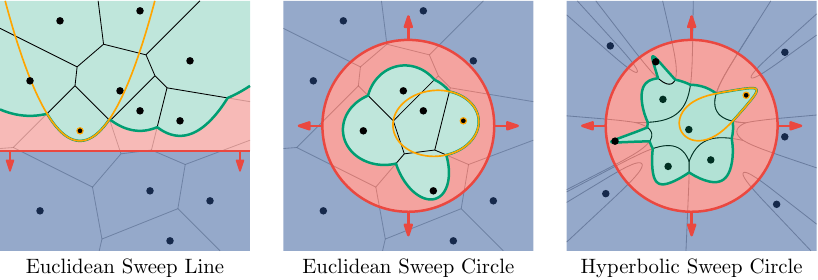}
  \caption{Illustration of Fortune's algorithm and its adaptations.
    Black dots denote the sites.  The sweep line (or circle) is dark
    red; arrows indicate its movement.  The green region denotes the
    completed part of the diagram (black edges).  The orange curve is
    the beach parabola (or ellipse) of the site with orange border.
    The beach line (or curve) is dark green.  The red region denotes
    the part between sweep line (or circle) and beach line (or curve).
    Blue areas denote the unseen region.}
  \label{fig:foruntes-algorithm}
\end{figure}

\subparagraph{Euclidean Sweep Circle.} It was later shown that
Fortune's algorithm is actually a degenerate form of a \emph{sweep
  circle} algorithm~\cite[Theorem
1]{Xin_Wang_Xia_Mueller_Wittig_Wang_He_2013}.  Instead of a sweep line
that traverses the plane, the idea is to use a \emph{sweep circle}
that grows from a certain point (see
Figure~\ref{fig:foruntes-algorithm} (center)).  Then, beach parabolas
become \emph{beach ellipses} that contain the points at equal distance
between a site and the sweep circle, and the beach line becomes a
\emph{beach curve} consisting of beach ellipse segments.  If one
imagined the center of the sweep circle to lie at infinite distance to
the sites, one obtains the original sweep line algorithm.

\section{A Hyperbolic Sweep Circle Approach}
\label{sec:hyperbolic-sweep-circle-approach}

In the polar-coordinate model, ``straight'' lines are hyperbolas that
are bent towards the pole.  Consequently, sweeping a line through the
hyperbolic plane is rather tedious, as even the scheduling of site
events leads to difficult computations.  A much more natural approach
is obtained by considering a sweep circle of expanding radius centered
at the pole.  This presents the first difference to the Euclidean
approach: While the choice for the center of the sweep circle is
basically irrelevant in Euclidean space, the pole is the only
reasonable choice in the hyperbolic plane, as the occurrence of site
events can be easily read from the radii of the sites then.
Scheduling circle events and handling events is more involved.  As
shown in Figure~\ref{fig:foruntes-algorithm} (right), the hyperbolic
counterparts of beach ellipses and the beach curve are rather
different from their Euclidean versions.  Nevertheless, the hyperbolic
sweep circle approach works analogously to the Euclidean variant.  The
idea is to simulate the expansion of the sweep circle, whose
increasing radius we denote with $\SweepRadius$, and to maintain two
data structures, which we describe in the following.

\subsection{Beach Curve}
\label{sec:beach-curve}

In the Euclidean sweep line approach the beach line consists of parts
of beach parabolas.  For each site $S \in \Set{S}$ above the sweep
line, the beach parabola represents all points with equal distance to
$S$ and the sweep line.  In the Euclidean sweep circle method, the
parabolas become ellipses.  They are defined for sites in the sweep
circle, denoted by
$\SitesInSweep = \{S \in \Set{S} \mid r(S) \le \SweepRadius\}$, and
consist of all points at equal distance to a site and the sweep
circle.

In the hyperbolic plane the beach ellipse of $S$, denoted by
$\Ellipse{S}$, is not actually elliptic
(Figure~\ref{fig:foruntes-algorithm} (right)).  However, we can
parameterize $\Ellipse{S}$ by the angles of the points that lie on it.
To this end, we define a function $\EllipseFunction{S}{\varphi}$ that
maps an angle $\varphi \in [0, 2\pi)$ to the radius of the point
$P \in \Ellipse{S}$ with $\varphi(P) = \varphi$.

\wormhole{lem-ellipse-curve}
\begin{lemma}
  \label{lem:ellipse-curve}
  Let $\SweepRadius > 0$ be the radius of the sweep circle, let $S \in
  \SitesInSweep$ be a site with $r(S) < \SweepRadius$, and let $P \in
  \Hplane$ be a point with $\varphi(P) = \varphi$.  Then, $P \in
  \Ellipse{S}$ if and only if $P$ has radius
  \begin{align*}
    \EllipseFunction{S}{\varphi} = \atanh \left(\frac{\cosh(\SweepRadius) - \cosh(r(S))}{\sinh(\SweepRadius) - \sinh(r(S))\cos(\varphi - \varphi(S))}\right ).
  \end{align*}
\end{lemma}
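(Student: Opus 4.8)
The plan is to translate the defining property of the beach ellipse — equidistance from the site $S$ and the sweep circle — into the distance formula~(\ref{eq:hyperbolic-distance}) and then to solve the resulting equation for the radius of $P$. First I would observe that, by definition, $P \in \Ellipse{S}$ exactly when $\Dist{P}{S}$ equals the hyperbolic distance from $P$ to the sweep circle $\Circle{O}$ of radius $\SweepRadius$. Restricting attention to points inside the sweep circle, where the beach curve lives, this latter distance equals $\SweepRadius - r(P)$ by the fact stated just before Lemma~\ref{lem:angle-affects-distance}. Hence $P \in \Ellipse{S}$ is equivalent to
\[
  \Dist{P}{S} = \SweepRadius - r(P).
\]
Since both sides are nonnegative and $\cosh$ is injective on $[0, \infty)$, I may apply $\cosh$ to both sides without losing equivalence. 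On the left I substitute~(\ref{eq:hyperbolic-distance}) with $\varphi(P) = \varphi$, and on the right I expand $\cosh(\SweepRadius - r(P)) = \cosh(\SweepRadius)\cosh(r(P)) - \sinh(\SweepRadius)\sinh(r(P))$ via the hyperbolic angle-subtraction identity.

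Next I would move all terms so as to collect those proportional to $\cosh(r(P))$ on one side and those proportional to $\sinh(r(P))$ on the other. Factoring then leaves a single equation of the form $\cosh(r(P))\,[\cosh(r(S)) - \cosh(\SweepRadius)] = \sinh(r(P))\,[\sinh(r(S))\cos(\varphi - \varphi(S)) - \sinh(\SweepRadius)]$. Dividing by $\cosh(r(P))$ and by the bracket on the right isolates $\tanh(r(P))$; negating numerator and denominator turns the two differences into $\cosh(\SweepRadius) - \cosh(r(S))$ and $\sinh(\SweepRadius) - \sinh(r(S))\cos(\varphi - \varphi(S))$, and applying $\atanh$ yields $r(P) = \EllipseFunction{S}{\varphi}$. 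Because every step in this chain is an equivalence, this establishes both directions of the ``if and only if'' simultaneously.

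The one part that needs genuine care rather than routine algebra is justifying the use of the distance-to-circle formula, i.e. that a point satisfying the equidistance condition actually has $r(P) \le \SweepRadius$. I would verify this directly from the derived expression: since $r(S) < \SweepRadius$, both numerator and denominator of the fraction are positive, and checking $\tanh(r(P)) \le \tanh(\SweepRadius)$ reduces, after cross-multiplying by the positive denominators and using $\cosh^2 - \sinh^2 = 1$, to the inequality $1 \le \cosh(r(S))\cosh(\SweepRadius) - \sinh(r(S))\sinh(\SweepRadius)\cos(\varphi - \varphi(S))$. By~(\ref{eq:hyperbolic-distance}) the right-hand side is exactly $\cosh$ of the distance between $S$ and the point on the sweep circle at angle $\varphi$, hence at least $1$. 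This confirms $r(P) \le \SweepRadius$ and shows the formula is self-consistent, closing the only gap in the argument.
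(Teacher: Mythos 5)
Your proposal is correct and follows essentially the same route as the paper's proof: translate the equidistance condition into $\Dist{P}{S} = \SweepRadius - r(P)$, apply $\cosh$ and the distance formula, use the hyperbolic angle-subtraction identity, and rearrange to isolate $\tanh(r(P))$. Your additional check that the resulting point satisfies $r(P) \le \SweepRadius$ (so the distance-to-circle formula applies) is a detail the paper leaves implicit, and it is a correct and welcome addition.
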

When the site $S$ is not inside the sweep circle, but on its arc
(i.e., $r(S) = \SweepRadius$), then its ellipse degenerates into the
line segment $\Ellipse{S} = \LineSegment{O}{S}$.  The beach curve
$\Set{B}$ is defined as the set of points lying on the outer most
parts of the beach ellipses, as shown in
Figure~\ref{fig:beach-structure}~(left).  Formally, we define the
\emph{active segments} of a site $S \in \SitesInSweep$ as
$\Active{\Ellipse{S}} = \{P \in \Ellipse{S} \mid \nexists S' \in
\SitesInSweep, P' \in \Ellipse{S'} \colon \varphi(P) = \varphi(P')
\land r(P) < r(P')\}$.  For an angular coordinate
$\varphi \in [0, 2\pi)$, we say that a site $S$ is \emph{active at
  $\varphi$}, if there exists a point $P \in \Active{\Ellipse{S}}$
with $\varphi(P) = \varphi$.  The beach curve~$\Set{B}$ is then given
by the union of the active segments of sites in the sweep circle,
i.e., $\Set{B} = \bigcup_{S \in \SitesInSweep} \Active{\Ellipse{S}}$.

Now consider two sites $S, T \in \SitesInSweep$ together with a point
$P$ lying on an intersection of their beach ellipses.  The distance
between $P$ and the sweep circle is given by $\SweepRadius - r(P)$ and
is, by definition of the beach ellipse, equal to $\Dist{P}{S}$ and to
$\Dist{P}{T}$.  We obtain the following.

\begin{observation}
  Let $S, T \in \SitesInSweep$ be two sites with beach ellipses
  $\Ellipse{S}$, $\Ellipse{T}$, and let
  $P \in \Intersection{\Ellipse{S}}{\Ellipse{T}}$ be a point on an
  intersection of them.  Then, $P$ lies on the perpendicular bisector
  $\Bisector{S}{T}$.
\end{observation}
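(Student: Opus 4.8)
The plan is to unfold the definition of the beach ellipse and invoke the fact, established earlier in the preliminaries, that the hyperbolic distance from any point $P$ with $r(P) \le \SweepRadius$ to the sweep circle $\Circle{O}$ equals $\SweepRadius - r(P)$. First I would observe that, since $P$ lies on $\Ellipse{S}$, it is by definition equidistant to $S$ and to the sweep circle, which gives $\Dist{P}{S} = \SweepRadius - r(P)$. Because $P$ also lies on $\Ellipse{T}$, the identical reasoning applied to $T$ yields $\Dist{P}{T} = \SweepRadius - r(P)$.

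Chaining these two equalities through their common right-hand side $\SweepRadius - r(P)$ immediately produces $\Dist{P}{S} = \Dist{P}{T}$. By the definition of the perpendicular bisector as the locus of points equidistant to the two sites, this places $P$ on $\Bisector{S}{T}$, which is exactly the claim. The whole argument is therefore just the remark already made in the paragraph preceding the statement, promoted to a formal two-line derivation.

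I do not expect any real obstacle here, as the statement is a direct consequence of the beach ellipse being a level set of the function ``distance to the site minus distance to the sweep circle''. The only point worth a moment's care is confirming that the radius-difference formula for the point-to-circle distance is applicable, i.e.\ that $r(P) \le \SweepRadius$; this holds automatically, since $P$ lies on a beach ellipse and is hence equidistant to a site inside the sweep circle and to the circle itself, which forces $P$ to lie inside the sweep circle. Notably, I anticipate needing neither the hyperbolic law of cosines nor Lemma~\ref{lem:angle-affects-distance}: the reasoning is purely definitional and should occupy only a couple of lines.
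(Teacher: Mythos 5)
Your proposal is correct and matches the paper's own justification, which appears in the paragraph immediately preceding the observation: both arguments note that $P$'s distance to the sweep circle is $\SweepRadius - r(P)$ and, by the definition of the beach ellipses, equals both $\Dist{P}{S}$ and $\Dist{P}{T}$, so $P$ lies on $\Bisector{S}{T}$. Your extra remark verifying $r(P) \le \SweepRadius$ is a harmless (and reasonable) bit of added care that the paper leaves implicit.
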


Thus, intersections of active beach ellipse segments move along the
bisectors of the sites as the sweep circle expands, and trace the
Voronoi edges, which are subsets of these bisectors.

\begin{figure}[t]
  \centering
  \includegraphics{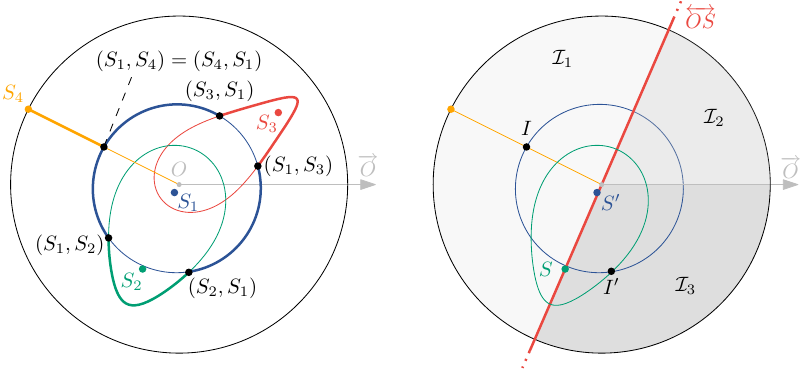}
  \caption{\textbf{(Left)} Four sites with beach ellipses.
    Since $S_4$ (orange) lies on the sweep circle (black),
    its beach ellipse is degenerate.  Active segments are bold.
    Intersections are black points.
    \textbf{(Right)} We implicitly compare $\varphi(I)$ and
    $\varphi(I')$ using the gray sectors $\Set{I}_1, \Set{I}_2$, and
    $\Set{I}_3$ that are defined using~$\Line{O}{S}$.}
  \label{fig:beach-structure}
\end{figure}

We represent the beach curve $\Set{B}$ as a tuple $\Set{I}$ of
intersections of active segments that are ordered by their angular
coordinates.  Since the beach curve is a closed curve, start and end
of the tuple are identified.  Note that the angular coordinates of the
intersections are constantly changing as the sweep circle expands.
Therefore, we represent the intersections implicitly as a tuple of the
two intersecting sites.  More precisely, consider an active segment of
a site $S$, that is bounded by the intersections $I$ and $I'$ with two
other active segments belonging to the sites $T$ and $T'$,
respectively.  If $I$ and $I'$ appear in this order when traversing
the beach curve in counterclockwise direction, then the segment is
represented by the corresponding tuples $(T, S)$ and $(S, T')$ in this
order.  See Figure~\ref{fig:beach-structure}~(left) for an
illustration of intersection tuples.

As the sweep circle expands, intersections enter and leave the beach
curve $\Set{B}$, so $\Set{I}$ needs to be maintained accordingly,
without knowing the angular coordinates of the intersections.  For
efficient insertions and deletions, we perform a binary search, while
potentially reducing the number of intersections whose actual
coordinates need to be computed.

Consider a new intersection $I$ that enters the beach curve and assume
that its angular coordinate $\varphi(I)$ is given.
In order to find its position in the data structure $\Set{I}$ we want
to perform a binary search and thus need to be able to decide whether
$\varphi(I) < \varphi(I')$ for a given intersection $I' \in \Set{I}$,
where $\varphi(I')$ is not known.  Let $S, S' \in \SitesInSweep$ be
the two sites with $I' \in \Intersection{\Ellipse{S}}{\Ellipse{S'}}$
and assume without loss of generality that $r(S) \ge r(S')$.  The idea
now is to use the line $\Line{O}{S}$ to decide whether
$\varphi(I) < \varphi(I')$.  If $\Ellipse{S}$ is degenerate, then
$\varphi(I') = \varphi(S)$ and the decision is straightforward.
Otherwise, we split the angular interval $[0, 2\pi)$ into three parts.
Interval $\Set{I}_1$ contains the angles of all points lying on the
opposite side of $\Line{O}{S}$ as the polar axis~$\SingleRay{O}$.
Intervals $\Set{I}_2$ and $\Set{I}_3$ contain the angles on the same
side of $\Line{O}{S}$ as $\SingleRay{O}$ that are smaller and larger
than $\varphi(S)$, respectively.  See
Figure~\Ref{fig:beach-structure}~(right) for an illustration of the
intervals.  Then, if $\varphi(I)$ and $\varphi(I')$ are in different
intervals, it is easy to decide whether $\varphi(I) < \varphi(I')$.
Only if they are in the same interval we need to compute the angular
coordinate of $I'$.  Thus, it may suffice to determine the intervals
that $\varphi(I)$ and $\varphi(I')$ lie in, without
computing~$\varphi(I')$ explicitly.  Since $\varphi(I)$ is given,
determining its interval is trivial.  Finding the one containing
$\varphi(I')$ is more involved.  We can compute this information once
when $I'$ enters the beach curve, since we know its angular coordinate
at this moment.  Then, if $I' \in \Set{I}_2$, this does not change as
the sweep circle expands.
However, if $I' \in \Set{I}_1$ or $I' \in \Set{I}_3$, it may move from
one interval to another by passing angular coordinate $0$ as the sweep
circle expands.  Then $I'$ moves from the first position in the beach
curve data structure $\Set{I}$ to the last or vice versa, which we
call a \emph{structure event}.  Such events are scheduled or canceled
every time the first or last element in $\Set{I}$ changes.

In conclusion, we now know how to maintain $\Set{I}$ and that we can
use the contained intersections to trace the Voronoi edges.  It
remains to determine the events at which intersections enter and leave
$\Set{I}$ and how we can use them to detect the Voronoi vertices.

\subsection{Event Queue}
\label{sec:event-queue}

The second data structure the algorithm maintains is a priority queue
called \emph{event queue} $\Set{Q}$, which stores the events at which
the beach curve changes, in the order an expanding sweep circle
encounters them.  That is, each event is associated with a point $P$
and its priority is~$r(P)$.  Analogous to the Euclidean version (see
Section~\ref{sec:fortunes-algorithm}), there are two types of events
that change the beach curve by either adding or removing active beach
ellipse segments.

An active beach ellipse segment is added to the beach curve when the
sweep circle expands beyond a previously unseen site, i.e., when the
radius $\SweepRadius$ of the sweep circle is equal to the radial
coordinate of that site.  (See site $S_4$ in
Figure~\ref{fig:beach-structure}~(left) for an example).  When
encountering such a \emph{site event}, the intersections of the newly
added active segment with an existing active segment are added to the
tuple $\Set{I}$ and the bisectors between the new site and the
corresponding existing site are marked as part of the Voronoi diagram.
Clearly, all site events can be scheduled in advance, since the radial
coordinates of the sites are given.

\begin{figure}[t]
  \centering
  \includegraphics{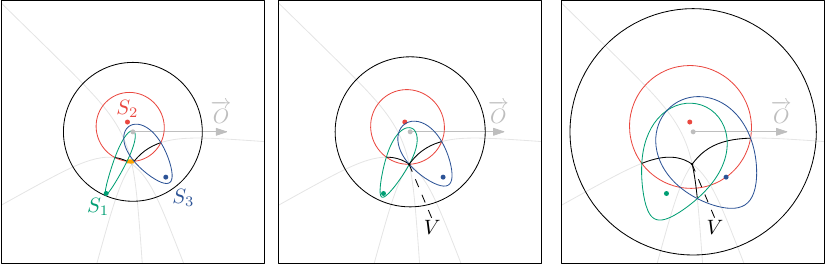}
  \caption{Beach curve intersections within an expanding sweep circle
    (black).  \textbf{(Left)} An active segment (orange) of $S_2$
    (red) is about to vanish.  \textbf{(Center)} The active segment
    has vanished as the intersections of $\Ellipse{S_1}$ and
    $\Ellipse{S_3}$ with $\Ellipse{S_2}$ meet at a point $V$.
    \textbf{(Right)} The vanished segment is replaced by a single
    intersection of $\Ellipse{S_1}$ and $\Ellipse{S_3}$.}
  \label{fig:merge}
\end{figure}

On the other hand, an active segment can be removed from the beach
curve, which happens when its two intersections with other active
segments merge, as shown in Figure~\ref{fig:merge}.  Let
$S_1, S_2, S_3 \in \SitesInSweep$ be three sites such that an active
segment of $\Ellipse{S_2}$ intersects an active segment of
$\Ellipse{S_1}$ and one of $\Ellipse{S_3}$.  When the two
intersections merge at a point~$V$, then $V$ has equal distance to the
sweep circle and to the three sites $S_1, S_2$, and $S_3$.
Consequently, all of them are incident to a circle $\Circle{V}$ that
is completely contained in the sweep circle.  Such an event is,
therefore, called \emph{circle event}.  Note that no unseen site can
lie in $\Circle{V}$.  Thus, if no site is contained in $\Circle{V}$ as
the circle event occurs, then $V$ is the center of an empty circle
and, therefore, a Voronoi vertex (see
Section~\ref{sec:prelim-voronoi}).  Then, $V$ is added to the diagram
and the bisectors of the incident sites are marked as being incident
to~$V$.  Moreover, the tuple of active beach ellipse intersections
$\Set{I}$ is adjusted, by removing the merged intersections and
replacing them with a new intersection of the corresponding active
segments of $\Ellipse{S_1}$ and $\Ellipse{S_3}$.

Such a circle event occurs when the sweep circle reaches the far point
$V'$ of $V$, i.e., the point with the largest radial coordinate among
all points on $\Circle{V}$.  Thus, whenever an active segment enters
or leaves the beach curve, we can use its intersections in $\Set{I}$
to determine the neighboring active segments and schedule circle
events by determining the point $V'$ for the circle the corresponding
sites are incident to.  We note that not all tuples of consecutive
beach curve intersections lead to a circle event, as the corresponding
bisectors may not intersect or the beach ellipse intersections may
diverge as the sweep circle expands.  Consequently, we distinguish
between \emph{true} and \emph{false} circle events and handle them
accordingly.

Instead of a continuous sweep motion, the algorithm then computes the
Voronoi diagram by iteratively processing the events in the event
queue, until the queue is empty.

Following the proof of correctness and the running time analysis of
the Euclidean version of the algorithm (see e.g.~\cite[Section
7.2]{bcko-cg-08}) while taking our adaptations into account, we can
show that the above sketched algorithm correctly computes the
hyperbolic Voronoi diagram.

\wormhole{thm-main}
\begin{theorem}
  \label{thm:main}
  Let $\Set{S} = \{S_1, \dots, S_n\} \subset \Hplane$ be a set of
  sites.  Then, the sweep circle algorithm computes $\Vor{\Set{S}}$ in
  time $\mathcal{O}(n \log(n))$.
\end{theorem}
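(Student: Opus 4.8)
The plan is to prove the two halves of the theorem separately: first that the algorithm is correct (it outputs exactly $\Vor{\Set{S}}$), and then that it runs in time $\mathcal{O}(n\log n)$. For correctness, I would argue that the sweep circle approach faithfully simulates a continuous expansion of $\Circle{O}$, and that at every radius $\SweepRadius$ the data structure $\Set{I}$ correctly encodes the beach curve $\Set{B}$. The key invariant is that the active segments of the beach ellipses partition the angular interval $[0,2\pi)$, and that the points traced by intersections of adjacent active segments lie on the bisectors $\Bisector{S}{T}$ (by the Observation in Section~\ref{sec:beach-curve}). I would then show that every Voronoi vertex is detected exactly once, as a circle event: by the characterization in Section~\ref{sec:prelim-voronoi}, a vertex $V$ is the center of an empty witness circle $\Circle{V}$ through at least three sites, and such a $V$ is discovered precisely when the sweep circle reaches its far point $V'$, since at that moment the two intersections bounding an active segment merge (Figure~\ref{fig:merge}). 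Conversely, every true circle event yields an empty circle and hence a genuine Voronoi vertex, which requires the true/false distinction of Lemma~\ref{lem:true-false-circle-detection}.

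The correctness argument breaks into establishing three things. First, the beach curve is well-defined and changes only at site, circle, and structure events, so the discrete event-driven simulation coincides with the continuous sweep. Second, site events are scheduled correctly: since the distance from $\Circle{O}$ to an interior point $A$ is $\SweepRadius - r(A)$, the sweep circle reaches a site exactly when $\SweepRadius = r(S)$, so sorting the sites by radius schedules all site events in advance. Third, circle events are scheduled and classified correctly, using the machinery for computing far points and for maintaining the ordered tuple $\Set{I}$ via the interval scheme $\Set{I}_1,\Set{I}_2,\Set{I}_3$ of Figure~\ref{fig:beach-structure}~(right). Putting these together, when the queue empties every cell boundary has been traced along its bisector and every vertex recorded, so the output equals $\Vor{\Set{S}}$.

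For the running time I would count events and charge each to a logarithmic-cost operation. There are exactly $n$ site events, all insertable into $\Set{Q}$ after an initial $\mathcal{O}(n\log n)$ sort by radius. The number of circle events is $\mathcal{O}(n)$, because each true circle event creates a Voronoi vertex and the Voronoi diagram of $n$ sites has $\mathcal{O}(n)$ vertices and edges (the diagram is a planar subdivision, so Euler's formula bounds its complexity linearly), while each false circle event can be charged to the creation of the active segment whose removal it would have signalled. Structure events occur only when the first or last element of $\Set{I}$ changes, so they too are $\mathcal{O}(n)$ overall. Each event triggers a constant number of insertions into or deletions from $\Set{I}$ and $\Set{Q}$, and each such operation costs $\mathcal{O}(\log n)$ using balanced search trees, with the binary search in $\Set{I}$ performing $\mathcal{O}(\log n)$ comparisons, each of which is $\mathcal{O}(1)$ thanks to the interval scheme (only rarely requiring an explicit coordinate computation). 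Hence the total is $\mathcal{O}(n\log n)$.

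The main obstacle, I expect, is the correctness of the circle-event machinery rather than the complexity bound. Unlike the Euclidean case, in the polar-coordinate model the beach ellipses are genuinely non-elliptic tear-drop-related curves bent toward the pole, so I must verify that the merging behavior of adjacent active segments really corresponds to an empty witness circle, and that the far point $V'$ is the correct sweep radius at which the merge happens. This hinges on Lemma~\ref{lem:ellipse-curve} together with the true/false classification, and on ruling out pathologies where two active-segment intersections appear to merge but the underlying bisectors diverge. I would treat this as the heart of the proof, relegating the complexity accounting to a straightforward charging argument once the event count is pinned to the $\mathcal{O}(n)$ size of the diagram.
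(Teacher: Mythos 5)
Your plan follows the same overall decomposition as the paper: the standard Fortune-style correctness argument (the beach curve changes only at site and circle events, and every Voronoi vertex is detected by a circle event) combined with an $\mathcal{O}(n)$ event count and an $\mathcal{O}(\log n)$ per-event cost. The gap is that the proposal defers exactly the content that constitutes the actual proof. You assert that the beach curve changes only at events and that the two intersections bounding an active segment merge when the sweep circle reaches the far point $V'$, but you never establish the hyperbolic facts these claims rest on. Concretely, the paper must prove: (i) that two non-degenerate beach ellipses intersect in exactly two points (via reducing the intersection equation to the roots of a single sinusoid); (ii) that these two intersections start at a common point on $\LineSegment{O}{T}$ and move monotonically apart along $\Bisector{S}{T}$ as $\SweepRadius$ grows, which is what makes the true/false classification of Lemma~\ref{lem:true-false-circle-detection} work and guarantees that exactly one of the two intersections of each pair reaches a given vertex; (iii) that an intersection is on the beach curve if and only if no third site is strictly closer to it (Lemma~\ref{lem:active-means-closer}); and (iv) that between consecutive site/true-circle events no intersection can enter or leave the beach curve (Lemmas~\ref{lem:intersections-enter} and~\ref{lem:intersections-leave}), which is the step that actually justifies ``changes only at events.'' Without (iv) the discrete simulation could miss topological changes of the beach curve, and without (ii) a scheduled circle event could fire even though the relevant intersections diverge. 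You correctly identify this machinery as the heart of the proof, but identifying it is not the same as supplying it.

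Two smaller points. Your claim that every vertex is detected \emph{exactly once} holds only under a general-position assumption (at most three sites on any witness circle); the paper assumes this explicitly and otherwise merges duplicate vertices in post-processing. And showing that a vertex $V$ is detected requires more than the existence of the intersections $I_{12}$ and $I_{23}$: one must show they remain active \emph{and consecutive} on the beach curve throughout $[r, r(V'))$ so that the corresponding circle event is actually present in the queue when $\SweepRadius$ reaches $r(V')$; this uses Corollary~\ref{cor:third-site-distinct} and Lemma~\ref{lem:intersection-direction} and is not covered by your outline. The complexity half of your argument is fine and matches the paper; you invoke Euler's formula for the $\mathcal{O}(n)$ diagram complexity where the paper cites known bounds from other models of the hyperbolic plane, and both routes work.
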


\section{Experiments}
\label{sec:experiments}

We implemented our
algorithm\footnote{\url{https://github.com/maxkatzmann/fortune-hyperbolic}}
in \CC{} and compared it to the existing state-of-the-art
implementation for computing two-dimensional hyperbolic Delaunay
complexes in CGAL~\cite{bit-hdt-22}.
Figure~\ref{fig:computed-drawings} shows exemplary Voronoi diagrams
and Delaunay complexes computed using our implementation.

\begin{figure}[t]
  \centering
  \includegraphics{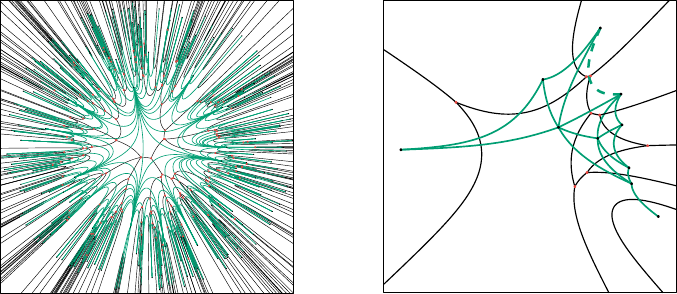}
  \caption{Voronoi diagrams for different sets of sites (black)
    computed using our approach, with Voronoi vertices in red and
    edges in black.  The Delaunay complex is shown in green.
    \textbf{(Left)} 227 sites are distributed in a disk of radius
    $\num{8.315}$. \textbf{(Right)} 10 sites are distributed in a disk
    of radius $\num{3.171}$.  The dashed edge was not found when using
    CGAL (Converted) before the fix.}
  \label{fig:computed-drawings}
\end{figure}

Since computations in hyperbolic space are notoriously prone to
numerical difficulties, the goal of the experiments is to determine
how long the different approaches yield viable results as the size of
the space containing the sites increases.  To this end, we considered
hyperbolic disks of increasing radii centered at the pole in the
polar-coordinate model of the hyperbolic plane and distributed sites
uniformly at random in them, according to the hyperbolic metric.
Recall that the area of a hyperbolic disk of radius~$r$ grows as
$2\pi (\cosh(r) - 1)$.  By setting the number of sites in a disk of
radius $r$ to $N(r) = 2 \pi (\cosh(c \cdot r) - 1)$ for a constant
$c$, this number is roughly proportional to the disk area, leading to
equally densely filled disks for different radii.  To account for the
exponential expansion of space, we then chose 10 radii in
logarithmically increasing steps, up to a maximum radius of $20$, and
chose $c$ such that $N(20) = \num{100000}$.  To obtain statistically
significant results, we performed $\num{100}$ samples for each disk,
leading to a data set of $\num{1000}$ instances.  As we explain below,
all experiments were, in fact, performed twice.

For each instance, we then computed the Voronoi diagram and
corresponding Delaunay complex using different techniques.  The first
technique is the existing state-of-the-art implementation in
CGAL~\cite{bit-hdt-22}, which was compiled with support enabled for
the CORE library for robust numeric and geometric
computation~\cite{klpy-clrngc-99}.  We note that this technique
performs all computations in the Poincar\'{e} disk model of the
hyperbolic plane (utilizing an algorithm for computing the Euclidean
Voronoi diagram followed by a post-processing step in which edges not
belonging to the hyperbolic diagram are removed), meaning all
coordinates of the sampled sites had to be converted from the
polar-coordinate model to this one, and the output of the computations
(the Voronoi vertices) had to be converted back.  This, of course,
leads to a disadvantage as the conversions involve an exponential
decrease and increase of the values, respectively, which can already
be affected by numerical inaccuracies.  We refer to this method as
\emph{CGAL (Converted)}.  Apart from that, we considered our algorithm
performing all computations natively in the polar-coordinate model of
the hyperbolic plane, and distinguished between different number
representations.  The first variant uses the \emph{IEEE 754
  Double-precision floating-point format}, which we refer to as
\emph{Native (Double)}.  Additionally, we considered variants where
all computations are performed using the \emph{multiple-precision
  floating-point library MPFR}~\cite{fhl-m-07}.  There, we set the
precisions to 64 bits and 128 bits, and denote the corresponding
techniques with \emph{Native (64)} and \emph{Native (128)},
respectively.  We note that all values are converted to Double
precision at the end, to get comparable outputs.

\begin{figure}[t!]
  \centering
  \input{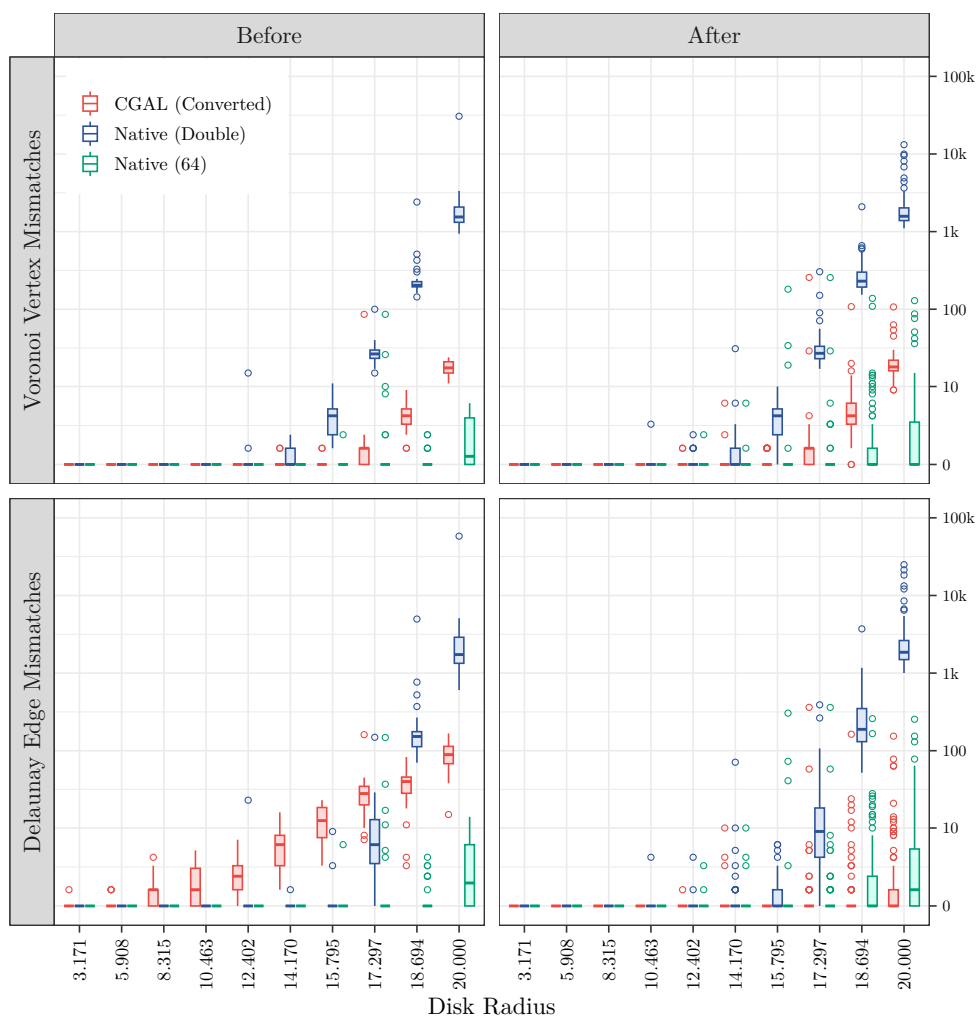}
  \caption{Comparing the robustness of different techniques for
    computing hyperbolic Voronoi diagrams and Delaunay complexes on
    sites that are distributed uniformly in disks of increasing radii.
    \textbf{(Top)} The number of Voronoi vertices present in the
    Native (128) diagram but \emph{not} in the one obtained using the
    considered technique. \textbf{(Bottom)} The number of Delaunay
    edges present in the Native (128) complex but \emph{not} in the
    one obtained by the considered the technique. \textbf{(Left)}
    Values obtained before we reported the issue. \textbf{(Right)}
    Values obtained after the issue was fixed.}
  \label{fig:experiments-comparison}
\end{figure}

Unfortunately, there is no ground truth that we can compare these
methods to.  Instead, we consider the Native (128) variant of our
algorithm as the one with the highest precision and compared all other
techniques to this one.  The differences between the outputs were then
quantified with respect to two measures.  The first denotes for how
many Voronoi vertices in the diagram computed using the Native (128)
method there was no exact match in the diagram computed using the
other technique, which we refer to as \emph{Voronoi vertex
  mismatches}.  Since comparing the structure of two diagrams is not
as straight-forward, we utilize the dual of the diagram, the Delaunay
complex, instead.  Thus, the second measure considers the graph
representing the Delaunay complex computed using the Native~(128)
method, and counts how many of its edges were not found in the complex
computed using the other technique, which we refer to as
\emph{Delaunay edge mismatches}.

Again, since there is no ground truth, we cannot know if either
implementation is correct.  However, Native (128) and CGAL (Converted)
produced the exact same Voronoi vertices for all $\num{800}$ instances
with a disk radius of at most $\num{10.463}$, which we refer to as the
\emph{small disks}.  Assuming that computations on larger disks were
affected by numerical inaccuracies, both implementations appear to be
equally correct in this regard.  For the Delaunay complexes the
situation is different.  On all small disks CGAL (Converted) yielded
the same graph \emph{or a subgraph} of the one obtained using Native
(128).  However, some of the graphs computed using CGAL (Converted)
were not connected (although, theoretically, the Delaunay complex
always is), from which we were able to infer that this is not an issue
with the Native (128) solution.  As mentioned before, we reported the
issue and a fix was supplied afterwards.

Figure~\ref{fig:experiments-comparison} shows the box plots
summarizing our experiments.  There, the considered disk radii are
shown on the $x$-axis, with the number of mismatches on the $y$-axis
in a logarithmic scale.  For each disk radius we depict three box
plots (distinguished by colors) aggregating the $\num{100}$ values
(one for each of the instances with that radius) obtained using the
different techniques.  Boxes extend to the $25$th and $75$th
percentile with the median shown as a horizontal bar, while whiskers
extend to $3/2$ times the interquartile range above and below the
boxes.  Circles denote values outside of this range.

We consider the Voronoi vertex mismatches first.  As shown in
Figure~\ref{fig:experiments-comparison} (top row) for the small disks
(of radius at most $\num{10.463}$), the red boxes representing the
CGAL (Converted) values degenerate into horizontal bars at $0$,
supporting the fact that the computed diagrams match the ones obtained
using Native (128), exactly.  As the disk radii increase beyond that,
so does the number of considered sites in an instance and with them
the Voronoi vertex mismatches.  A similar trend can be seen for the
other two techniques as well.  Compared to CGAL (Converted) the Native
(64) method shows less mismatches, while Native (Double) yielded more
mismatches and did so on smaller disks, as witnessed by the outlier
circles for disk radii of $\num{10.463}$ and larger.  We can conclude
that the computation of the Voronoi vertices in the CGAL (Converted)
method is more robust than the Native (Double) one, despite the
conversions of the coordinates between the different models of the
hyperbolic plane.

Initially, our analysis of the Delaunay edge mismatches showed a
rather different behavior, see Figure~\ref{fig:experiments-comparison}
(bottom left).  For none of the considered disk radii did CGAL
(Converted) reliably match the edges obtained using Native~(128).
Even for the smallest considered disk radius of $\num{3.171}$ there
were instances where an edge was not found, as illustrated in
Figure~\ref{fig:computed-drawings}~(right).  As with the Voronoi
vertices, the edge mismatches then increase with increasing disk
radii.  As mentioned above, this eventually led to disconnected
complexes, which is not possible in theory.  For the Native (Double)
method, note that the Delaunay edge mismatches align with the
corresponding Voronoi vertex mismatches.  That is, while the edge
mismatches increase with increasing radii, we see no mismatches for
the disk radii up to $\num{8.315}$.  The Native (64) method behaves
similarly, although mismatches only start to occur at larger disk
radii.

The fact that CGAL (Converted) did report the correct Voronoi vertices
for smaller disk radii but not the correct Delaunay edges hinted at a
possible issue in the implementation, which was reported and
subsequently fixed.  Consequently, we performed all experiments again
with the corrected implementation.  The results are shown in
Figure~\ref{fig:experiments-comparison}~(right).  As can be seen, the
number of edge mismatches in CGAL (Converted) are reduced drastically
and are now aligned with the corresponding Voronoi vertex mismatches.
This indicates that the mismatches are a result of numerical
inaccuracies.

\section{Conclusion \& Outlook}
\label{sec:outlook}

We present the first algorithm for computing Voronoi diagrams natively
in the polar-coordinate model of the two-dimensional hyperbolic plane.
We note that the distance function in this model generalizes nicely to
higher dimensions, as only the computation of the angular distance
between two points changes (one has to compute the central angle).
Consequently, we believe that the hyperbolic sweep circle approach can
be extended to higher dimensions.  Furthermore, future work may
consider extending the method to allow for computations of
higher-order Voronoi diagrams, as they are useful in the context of
nearest-neighbor queries.

Independently, the implementation of our algorithm turned out to be of
interest in general as a reference that can be used to evaluate the
correctness of existing implementations, highlighting that ease of
implementation can be an important criterion when assessing the
practicability of an algorithm.  On the other hand, this emphasizes
how much effort goes into maintaining fast and reliable
state-of-the-art implementations, for which we are grateful.

Unfortunately, our experimental evaluation shows that neither the
current state-of-the-art implementation for computing hyperbolic
Voronoi diagrams and Delaunay complexes nor our solution can be used
to reliably solve the problem in the polar-coordinate model of the
hyperbolic plane, if the considered disk radii are sufficiently large.
In particular, they are not suitable for typical applications in
network science, like the aforementioned hyperbolic random graphs,
where networks of 25k nodes already require disk sizes that exceed
what we considered in our experiments.

Thus, since none of the considered approaches seems to be able to
scale to such large areas, further extensions are necessary to make
them more robust.  Alternatively, it may prove worthwhile to examine
adaptations of other approaches to computing Voronoi diagrams, like
randomized incremental construction~\cite{gks-rv-90}, divide and
conquer~\cite{sh-c-75}, utilizing variants of abstract Voronoi
diagrams~\cite{k-cavd-93}, or a topology-oriented
method~\cite{si-rtoiavd-94}.

Nevertheless, our method is the first stepping stone towards applying
hyperbolic Voronoi diagrams and Delaunay complexes in the ongoing
study of complex networks utilizing the polar-coordinate model of the
hyperbolic plane.  In the following, we briefly highlight a direct
application of our algorithm in this context, which is the computation
of the hyperbolic counterpart to Euclidean minimum spanning trees.  To
the best of our knowledge \emph{hyperbolic random minimum spanning
  trees} have not been studied before, but may prove useful as the
tree equivalent to the aforementioned hyperbolic random graph
model~\cite{kpf-hgcn-10}.  Analogous to the Euclidean version, the
hyperbolic minimum spanning tree is a subgraph of the Delaunay complex
of a given set of sites, meaning it can be computed in time
$\mathcal{O}(n \log(n))$ using our approach.  Consequently, it would
be interesting to utilize hyperbolic Delaunay complexes in the
polar-coordinate model to generate and investigate trees in the
hyperbolic plane.



\bibliography{lipics-v2021-sample-article}

\newpage

\appendix

\section{Missing Proofs}
\label{apx:missing-proofs}

In the following, we give the proofs that were left out of the main
part of the paper due to space constraints.

\subsection{Preliminaries}

\begin{backInTime}{lem-angle-affects-distance}
  \begin{lemma}
    Let $\Triangle{A}{B}{C}$ and $\Triangle{A}{B}{C'}$ be triangles
    with $\Dist{B}{C} = \Dist{B}{C'}$ and angles $\varphi$
    and~$\varphi'$ at $B$, respectively.  Then, $\varphi < \varphi'$
    (resp. $\varphi > \varphi'$) if and only if
    $\Dist{A}{C} < \Dist{A}{C'}$ (resp. $\Dist{A}{C} > \Dist{A}{C'}$).
  \end{lemma}
  \begin{proof}
    We prove the claim for the case where $\varphi < \varphi'$.  The
    proof for the other case is analogous.  We start by showing that
    changing the angle also changes the length of the line segment
    accordingly.  Since $\varphi, \varphi'$ are inner angles of
    triangles, we have $\varphi, \varphi' \in [0, \pi)$ and thus
    $\cos(\varphi) > \cos(\varphi')$.  We can now determine
    $\Dist{A}{C}$ using the hyperbolic law of cosines and make use of
    the fact that $\Dist{B}{C} = \Dist{B}{C'}$, which yields
    \begin{align*}
      \Dist{A}{C} &= \acosh \left( \cosh(\Dist{A}{B})\cosh(\Dist{B}{C}) - \sinh(\Dist{A}{B})\sinh(\Dist{B}{C}) \cos(\varphi) \right) \\
                  &= \acosh \left( \cosh(\Dist{A}{B})\cosh(\Dist{B}{C'}) - \sinh(\Dist{A}{B})\sinh(\Dist{B}{C'}) \cos(\varphi) \right) \\
                  &< \acosh \left( \cosh(\Dist{A}{B})\cosh(\Dist{B}{C'}) - \sinh(\Dist{A}{B})\sinh(\Dist{B}{C'}) \cos(\varphi') \right) \\
                  &= \Dist{A}{C'},
    \end{align*}
    where the inequality is due to the fact that $\cos(\varphi) >
    \cos(\varphi')$ and that $\acosh(x)$ is strictly increasing with
    increasing $x$.
    
    It remains to prove that changing the length of the line segment
    also changes the angle accordingly.  Note that $\Dist{A}{C} <
    \Dist{A}{C'}$ implies $\cosh(\Dist{A}{C}) < \cosh(\Dist{A}{C'})$,
    since $\cosh(x)$ is strictly increasing for increasing $x \ge 0$.
    Again, using the hyperbolic law of cosines, we can express $\varphi$
    using the lengths of the line segments in $\Triangle{A}{B}{C}$ as
    \begin{align*}
      \varphi &= \acos \left( \frac{\cosh(\Dist{A}{B})\cosh(\Dist{B}{C}) - \cosh{\Dist{A}{C}}}{\sinh(\Dist{A}{B}) \sinh(\Dist{B}{C})} \right) \\
              &= \acos \left( \frac{\cosh(\Dist{A}{B})\cosh(\Dist{B}{C'}) - \cosh{\Dist{A}{C}}}{\sinh(\Dist{A}{B}) \sinh(\Dist{B}{C'})} \right) \\
              &< \acos \left( \frac{\cosh(\Dist{A}{B})\cosh(\Dist{B}{C'}) - \cosh{\Dist{A}{C'}}}{\sinh(\Dist{A}{B}) \sinh(\Dist{B}{C'})} \right) \\
              &= \varphi',
    \end{align*}
    where the inequality is due to the fact that $\cosh(\Dist{A}{C}) <
    \cosh(\Dist{A}{C'})$ and that $\acos(x)$ is strictly decreasing for
    increasing $x$.
  \end{proof}
\end{backInTime}

\begin{backInTime}{lem-distance-continuous}
  \begin{lemma}
    Let $\Triangle{A}{B}{C}$ be a triangle with $\Dist{A}{B} \le
    \Dist{A}{C}$.  For every $d \in [\Dist{A}{B}, \Dist{A}{C}]$, there
    exists a point $P \in \LineSegment{B}{C}$ with $\Dist{A}{P} = d$.
  \end{lemma}
  \begin{proof}
    For $d = \Dist{A}{B}$ or $d = \Dist{A}{C}$, the points $P = B$ and
    $P = C$ fulfill the requirements of the lemma, respectively.  So
    assume that $d \in (\Dist{A}{B}, \Dist{A}{C})$.  Now consider the
    circle $\Circle{A}$ of radius $d$ around $A$.  Since $\Dist{A}{B} <
    d$, the point $B$ is inside the circle.  Moreover, since
    $\Dist{A}{C} > d$, the point $C$ is outside the circle.
    Consequently, the line segment $\LineSegment{B}{C}$ intersects the
    circle at a point $P =
    \Intersection{\LineSegment{B}{C}}{\Circle{A}}$.  In particular, $P
    \in \Circle{A}$ and thus $\Dist{A}{P} = d$.
  \end{proof}
\end{backInTime}

\subsection{Beach Curve}

\begin{backInTime}{lem-ellipse-curve}
  \begin{lemma}
    Let $\SweepRadius > 0$ be the radius of the sweep circle, let $S \in
    \SitesInSweep$ be a site with $r(S) < \SweepRadius$, and let $P \in
    \Hplane$ be a point with $\varphi(P) = \varphi$.  Then, $P \in
    \Ellipse{S}$ if and only if $P$ has radius
    \begin{align*}
      \EllipseFunction{S}{\varphi} = \atanh \left(\frac{\cosh(\SweepRadius) - \cosh(r(S))}{\sinh(\SweepRadius) - \sinh(r(S))\cos(\varphi - \varphi(S))}\right ).
    \end{align*}
  \end{lemma}
  \begin{proof}
    By definition, we have $P \in \Ellipse{S}$ if and only if $P$ is
    equidistant to $S$ and the sweep circle.  Since the distance between
    a point and a circle centered at the origin is given by the
    difference of their radii, $P$ needs to fulfill the equality
    \begin{align*}
      \Dist{P}{S} = \SweepRadius - r(P).
    \end{align*}
    Applying Equation~\ref{eq:hyperbolic-distance}, which describes the
    hyperbolic distance between two points, together with an application
    of the hyperbolic cosine on both sides then yields
    \begin{align*}
      \cosh(r(P))\cosh(r(S)) - \sinh(r(P))\sinh(r(S))\cos(\varphi(P) - \varphi(S)) = \cosh(\SweepRadius - r(P)).
    \end{align*}
    We can now apply the identity $\cosh(x - y) = \cosh(x)\cosh(y) -
    \sinh(x)\sinh(y)$ to the right hand side and obtain
    \begin{multline*}
      \cosh(r(P))\cosh(r(S)) - \sinh(r(P))\sinh(r(S))\cos(\varphi(P) - \varphi(S)) =\\
      \cosh(\SweepRadius)\cosh(r(P)) - \sinh(\SweepRadius)\sinh(r(P)).
    \end{multline*}
    We continue by subtracting $\cosh(\SweepRadius)\cosh(r(P))$ on both
    sides and adding $\sinh(\SweepRadius)\sinh(r(P))$, which yields
    \begin{multline*}
      \sinh(\SweepRadius)\sinh(r(P)) - \sinh(r(P))\sinh(r(S))\cos(\varphi(P) - \varphi(S)) =\\
      \cosh(\SweepRadius)\cosh(r(P)) - \cosh(r(P))\cosh(r(S)).
    \end{multline*}
    By factoring out $\sinh(r(P))$ and $\cosh(r(P))$, we get
    \begin{align*}
      \sinh(r(P)) \big( \sinh(\SweepRadius) - \sinh(r(S))\cos(\varphi(P) - \varphi(S)) \big) = \cosh(r(P)) \big(\cosh(\SweepRadius) - \cosh(r(S)) \big),
    \end{align*}
    which is equivalent to 
    \begin{align*}
      \frac{\sinh(r(P))}{\cosh(r(P))} = \frac{\cosh(\SweepRadius) - \cosh(r(S))}{\sinh(\SweepRadius) - \sinh(r(S))\cos(\varphi(P) - \varphi(S))},
    \end{align*}
    after dividing by $\cosh(r(P))$ and $\sinh(\SweepRadius) -
    \sinh(r(S))\cos(\Delta_\varphi(P, S))$ on both sides.  Finally, we
    recognize that the left hand side is the hyperbolic tangent
    $\tanh(r(P))$.  Applying the inverse hyperbolic tangent then yields
    the claim.
  \end{proof}
\end{backInTime}

\subsection{Correctness and Complexity}
\label{sec:algorithm-correctness}

In this section, we present the components for the proof of our main
theorem.

\begin{backInTime}{thm-main}
\begin{theorem}
  Let $\Set{S} = \{S_1, \dots, S_n\} \subset \Hplane$ be a set of
  sites.  Then, the sweep circle algorithm computes $\Vor{\Set{S}}$ in
  time $\mathcal{O}(n \log(n))$.
\end{theorem}
\end{backInTime}

Like the algorithm itself, the proof of its correctness works
analogous to the Euclidean version~\cite[Section 7.2]{bcko-cg-08}.
Therefore, we only focus on the main parts of the proof and show that
they also hold for the hyperbolic sweep circle approach.  In
particular, we show that the only way in which a new beach ellipse
segment can appear on the beach curve is through a site event
(Lemma~\ref{lem:segments-appear-through-sites}), the only way in which
an existing segment can disappear from the beach curve is through a
circle event (Lemma~\ref{lem:segments-disappear-through-circles}), and
that every Voronoi vertex is detected by means of a circle event
(Lemma~\ref{lem:vertices-through-circle-events}).

\subsubsection{Beach Ellipse Intersections}
\label{sec:intersections}


To start, we establish some basic properties of beach ellipse
intersections in the hyperbolic plane, beginning with their existence.
Given a site $S \in \SitesInSweep$, we say that a point $P$ is
\emph{inside} the beach ellipses $\Ellipse{S}$, if $\Dist{P}{S}$ is
smaller than the distance between $P$ and the sweep circle.  If
$\Dist{P}{S}$ is larger instead, we say that $P$ is \emph{outside} of
$\Ellipse{S}$.  The following lemma shows that, given two distinct
sites inside the sweep circle, the beach ellipse of one contains two
points that are inside and outside of the ellipse of the other,
respectively, as illustrated in Figure~\ref{fig:intersections} (left).

\begin{figure}
  \centering
  \includegraphics{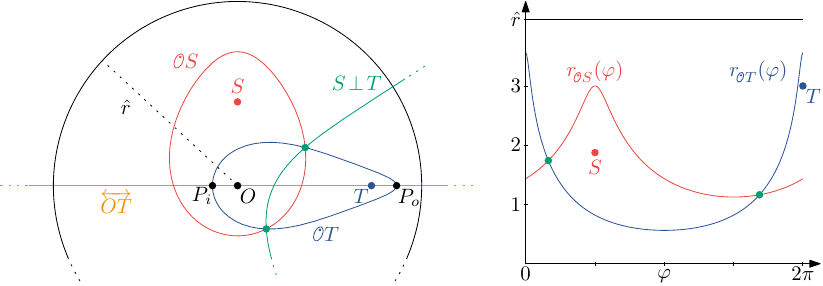}
  \caption{\textbf{(Left)} Two sites $S, T$ inside the sweep circle
    (black).  The points $P_i$ and $P_o$ are on $\Ellipse{T}$ (blue)
    and are inside and outside of $\Ellipse{S}$ (red), respectively.
    The intersections (green points) of the ellipses are on the
    perpendicular bisector $\Bisector{S}{T}$ (green) and on opposite
    sites of the line $\Line{O}{T}$ (orange). \textbf{(Right)} The
    functions $\EllipseFunction{S}{\varphi}$ and
    $\EllipseFunction{T}{\varphi}$ describe the arcs of the beach
    ellipses on the left.}
  \label{fig:intersections}
\end{figure}

\begin{lemma}
  \label{lem:inside-outside}
  Let $\SweepRadius$ be the radius of the sweep circle, let $S \neq T
  \in \SitesInSweep$ be two sites with $0 < r(S) \le r(T) <
  \SweepRadius$ and let $\Ellipse{S}$ and $\Ellipse{T}$ be their beach
  ellipses.  Then the point $P_i \in \Ellipse{T}$ with $\varphi(P_i) =
  \varphi(T) + \pi$ is inside of $\Ellipse{S}$ and $P_o \in
  \Ellipse{T}$ with $\varphi(P_o) = \varphi(T)$ is outside of
  $\Ellipse{S}$.
\end{lemma}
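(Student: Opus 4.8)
The plan is to reduce both claims to comparisons of hyperbolic distances, since the definitions of \emph{inside} and \emph{outside} are phrased in terms of the distance to the sweep circle. Because both $P_i$ and $P_o$ lie on $\Ellipse{T}$, their distance to the sweep circle equals their distance to $T$; that is, $\SweepRadius - r(P_i) = \Dist{P_i}{T}$ and $\SweepRadius - r(P_o) = \Dist{P_o}{T}$. Hence $P_i$ is inside $\Ellipse{S}$ if and only if $\Dist{P_i}{S} < \Dist{P_i}{T}$, and $P_o$ is outside $\Ellipse{S}$ if and only if $\Dist{P_o}{S} > \Dist{P_o}{T}$. This is the reformulation I would prove.

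Next I would pin down the two distances to $T$ using that $P_o$, $T$, and the pole $O$ are collinear (all at angle $\varphi(T)$), while $P_i$ lies on the opposite ray (angle $\varphi(T)+\pi$). For $P_o$, the geodesic to $T$ runs along the common ray, so $\Dist{P_o}{T} = |r(P_o) - r(T)|$; moreover $r(P_o) > r(T)$, since otherwise the equidistance condition $r(T) - r(P_o) = \SweepRadius - r(P_o)$ would force $r(T) = \SweepRadius$, contradicting $r(T) < \SweepRadius$. Thus $\Dist{P_o}{T} = r(P_o) - r(T)$. For $P_i$, the geodesic to $T$ passes through $O$, giving $\Dist{P_i}{T} = r(P_i) + r(T)$; this also follows directly from Equation~\ref{eq:hyperbolic-distance} with $\cos(\varphi(P_i) - \varphi(T)) = \cos(\pi) = -1$.

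The core of the argument is then a single application of the triangle inequality through the pole $O$ (which holds in $\Hplane$, as remarked after Lemma~\ref{lem:angle-affects-distance}), combined with the hypothesis $r(S) \le r(T)$. For $P_o$ I would use the reverse triangle inequality to write $\Dist{P_o}{S} \ge \Dist{O}{P_o} - \Dist{O}{S} = r(P_o) - r(S) \ge r(P_o) - r(T) = \Dist{P_o}{T}$, and for $P_i$ I would write $\Dist{P_i}{S} \le \Dist{O}{P_i} + \Dist{O}{S} = r(P_i) + r(S) \le r(P_i) + r(T) = \Dist{P_i}{T}$. Both chains already give inequalities in the required direction; it remains only to upgrade them to strict inequalities.

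The main obstacle is establishing strictness, which I would handle by a short case distinction on whether $S$ and $T$ share an angular coordinate (note that $\varphi(S)$ is well defined since $r(S) > 0$). If $\varphi(S) \neq \varphi(T)$, then $S$ does not lie on the geodesic $\LineSegment{O}{P_o}$, nor does $O$ separate $P_i$ from $S$, so the triangle inequality step is strict in each chain and we are done. If $\varphi(S) = \varphi(T)$, then $r(S) < r(T)$ because $S \neq T$, so the radius-comparison step, $r(P_o) - r(S) > r(P_o) - r(T)$ respectively $r(P_i) + r(S) < r(P_i) + r(T)$, is strict instead. In either case the strict inequalities $\Dist{P_o}{S} > \Dist{P_o}{T}$ and $\Dist{P_i}{S} < \Dist{P_i}{T}$ follow, proving that $P_o$ is outside and $P_i$ is inside $\Ellipse{S}$. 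I would also verify the inclusion $r(S) < r(P_o)$ used in the second case so that $S$ genuinely lies on $\LineSegment{O}{P_o}$, which holds since $r(S) \le r(T) < r(P_o)$.
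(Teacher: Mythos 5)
Your proof is correct and follows essentially the same route as the paper's: reduce both claims to comparing $\Dist{P}{S}$ with $\Dist{P}{T}$ via the equidistance property of $\Ellipse{T}$, compute $\Dist{P_i}{T}$ and $\Dist{P_o}{T}$ radially, and bound $\Dist{P}{S}$ by $r(P)\pm r(S)$ using the triangle inequality through the pole. The only difference is organizational: the paper enumerates the collinear configurations ($\varphi(S)=\varphi(T)$, $\varphi(S)=\varphi(T)+\pi$, and the generic case) with explicit distances in each, whereas you run one uniform inequality chain and isolate strictness into a two-case check, which is a slightly tidier packaging of the same argument.
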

\begin{proof}
  Assume, without loss of generality, that $\varphi(T) = 0$.  To show
  that $P_i$ is inside of $\Ellipse{S}$, we show that the distance
  between $P_i$ and the sweep circle is larger than $\Dist{P_i}{S}$.
  More precisely, since $P_i \in \Ellipse{T}$, $P_i$ is equidistant to
  the sweep circle and to $T$.  Consequently, it suffices to show that
  $\Dist{P_i}{T} > \Dist{P_i}{S}$.  We now distinguish several cases
  depending on the position of $S$.  If $\varphi(S) = \varphi(T)$,
  then we have $r(S) < r(T)$ (since $S \neq T$) and thus
  \begin{align*}
    \Dist{P_i}{T} = r(P_i) + r(T) > r(P_i) + r(S) = \Dist{P_i}{S}.
  \end{align*}
  If $\varphi(S) = \varphi(T) + \pi$, we first consider the case where
  $r(S) \ge r(P_i)$.  Then,
  \begin{align*}
    \Dist{P_i}{T} = r(P_i) + r(T) \ge r(P_i) + r(S) > r(S) - r(P_i) = \Dist{P_i}{S}.
  \end{align*}
  Alternatively, if $\varphi(S) = \varphi(T) + \pi$ and
  $r(S) < r(P_i)$, we have
  \begin{align*}
    \Dist{P_i}{T} = \Dist{P_i}{S} + r(S) + r(T) > \Dist{P_i}{S}.
  \end{align*}
  In all other cases, we can consider the triangle
  $\Triangle{O}{S}{P_i}$.  Since the area of this triangle is
  non-zero, we can apply the strict triangle inequality, which yields
  $\Dist{P_i}{S} < r(P_i) + r(S)$.  It follows that
  \begin{align*}
    \Dist{P_i}{T} = r(P_i) + r(T) \ge r(P_i) + r(S) > \Dist{P_i}{S}.
  \end{align*}
  Consequently, in all cases $P_i$ is closer to $S$ than to the sweep
  circle, and is therefore inside~$\Ellipse{S}$.  Proving that $P_o$
  is outside of $\Ellipse{S}$ works analogously.  We show that $P_o$
  is closer to the sweep circle than to $S$, which is equivalent to
  showing $\Dist{P_o}{T} < \Dist{P_o}{S}$, since $P_o$ is equidistant
  to $T$ and the sweep circle.  We again distinguish several cases
  depending on the position of $S$.  If $\varphi(S) = \varphi(T)$, the
  distinctness of $S$ and $T$ implies $r(S) < r(T)$ and thus
  \begin{align*}
    \Dist{P_o}{T} = r(P_o) - r(T) < r(P_o) - r(S) = \Dist{P_o}{S}.
  \end{align*}
  Similarly, if $\varphi(S) = \varphi(T) + \pi$, we have
  \begin{align*}
    \Dist{P_o}{T} = r(P_o) - r(T) < r(P_o) + r(S) = \Dist{P_o}{S}.
  \end{align*}
  Again, in all other cases, we can apply the strict triangle
  inequality to conclude $r(P_o) < r(S) + \Dist{P_o}{S}$ or
  equivalently $r(P_o) - r(S) < \Dist{P_o}{S}$.  Thus,
  \begin{align*}
    \Dist{P_o}{T} = r(P_o) - r(T)  \le r(P_o) - r(S) < \Dist{P_o}{S},
  \end{align*}
  which shows that $P_o$ is outside of $\Ellipse{S}$ in all cases.
\end{proof}

With the above lemma we can now prove that two non-degenerate beach
ellipses intersect in exactly two points.  If one of them is
degenerate, they intersect in exactly one point.

\begin{lemma}
  \label{lem:ellipse-intersections}
  Let $\SweepRadius$ be the radius of the sweep circle, let $S \neq T
  \in \SitesInSweep$ be two sites with $0 < r(S) \le r(T)$ and let
  $\Ellipse{S}$ and $\Ellipse{T}$ be their beach ellipses.  Then,
  $\Intersection{\Ellipse{S}}{\Ellipse{T}}$ contains two points if
  $r(S), r(T) < \SweepRadius$ and one point, otherwise.
\end{lemma}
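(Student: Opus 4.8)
The plan is to use the radial parameterization from Lemma~\ref{lem:ellipse-curve} to reduce the question to counting the angles at which the two ellipse functions coincide. Since each of $\Ellipse{S}$ and $\Ellipse{T}$ meets every ray emanating from the pole in exactly one point (each being the graph of $\EllipseFunction{S}{\varphi}$ resp.\ $\EllipseFunction{T}{\varphi}$ over $\varphi \in [0, 2\pi)$), a point $P$ lies in $\Intersection{\Ellipse{S}}{\Ellipse{T}}$ if and only if $\EllipseFunction{S}{\varphi(P)} = \EllipseFunction{T}{\varphi(P)}$. First I would record that both functions are well-defined and finite at every angle: the numerators $\cosh(\SweepRadius) - \cosh(r(\cdot))$ are positive, and the denominators are bounded below by $\sinh(\SweepRadius) - \sinh(r(\cdot)) > 0$, so the argument of $\atanh$ always lies in $(0, 1)$. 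Hence each non-degenerate beach ellipse is a closed curve looping once around the pole, and $\Intersection{\Ellipse{S}}{\Ellipse{T}}$ is in bijection with the set of solutions $\varphi \in [0, 2\pi)$ of $\EllipseFunction{S}{\varphi} = \EllipseFunction{T}{\varphi}$.

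For the upper bound, I would equate the two formulas from Lemma~\ref{lem:ellipse-curve}. As $\atanh$ is injective, this is equivalent to equality of the two fractions, and since both denominators are strictly positive I can cross-multiply without introducing spurious solutions or losing any. Expanding $\cos(\varphi - \varphi(S))$ and $\cos(\varphi - \varphi(T))$ via the angle-subtraction identity, every term becomes linear in $\cos\varphi$ and $\sin\varphi$, so the equation collapses to the form $A\cos\varphi + B\sin\varphi = C$ for constants depending only on $\SweepRadius$, $r(S)$, $r(T)$, $\varphi(S)$, $\varphi(T)$. Such an equation has at most two solutions in $[0, 2\pi)$ unless $A = B = 0$; I would rule that degenerate case out by noting that $A = B = 0$ would force $\EllipseFunction{S}{\varphi} = \EllipseFunction{T}{\varphi}$ for all angles, contradicting the strict inequality at $\varphi(T)$ established below. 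This yields at most two intersection points.

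For the lower bound in the non-degenerate case $r(S), r(T) < \SweepRadius$, I would invoke Lemma~\ref{lem:inside-outside}. Translating ``inside/outside of $\Ellipse{S}$'' into the radial picture, the point $P_i \in \Ellipse{T}$ at angle $\varphi(T) + \pi$ being inside $\Ellipse{S}$ means $\EllipseFunction{T}{\varphi(T)+\pi} < \EllipseFunction{S}{\varphi(T)+\pi}$, while $P_o \in \Ellipse{T}$ at angle $\varphi(T)$ being outside gives $\EllipseFunction{T}{\varphi(T)} > \EllipseFunction{S}{\varphi(T)}$. Since $\Line{O}{T}$ is the diameter at angles $\varphi(T)$ and $\varphi(T)+\pi$, the continuous function $g(\varphi) = \EllipseFunction{T}{\varphi} - \EllipseFunction{S}{\varphi}$ changes sign on each of the two open arcs between these angles, so by the intermediate value theorem it vanishes at least once in each arc, producing at least two intersection points, one on each side of $\Line{O}{T}$. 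Together with the upper bound, this gives exactly two.

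Finally, for the degenerate case $r(T) = \SweepRadius$ the ellipse $\Ellipse{T}$ collapses to the segment $\LineSegment{O}{T}$, a single ray at angle $\varphi(T)$. Since $\Ellipse{S}$ meets this ray in the unique point of radius $\EllipseFunction{S}{\varphi(T)}$, and since every point of $\Ellipse{S}$ has radius strictly below $\SweepRadius$, that point lies on the segment, so the intersection is a single point (and if in addition $r(S) = \SweepRadius$, the two ellipses are distinct radial segments meeting only at the pole, again one point). I expect the main obstacle to be the bookkeeping in the algebraic reduction---checking that cross-multiplication preserves the exact solution set and that the resulting trigonometric equation is genuinely non-degenerate---whereas the geometric sign analysis follows directly from Lemma~\ref{lem:inside-outside}.
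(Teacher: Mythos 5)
Your proof is correct and follows essentially the same route as the paper: Lemma~\ref{lem:inside-outside} plus the intermediate value theorem for the lower bound of two, reduction of $\EllipseFunction{S}{\varphi} = \EllipseFunction{T}{\varphi}$ to a single sinusoid equation for the upper bound, and the same case analysis for degenerate ellipses. The only minor divergence is how non-degeneracy of the resulting sinusoid is established: the paper computes the combined amplitude explicitly and argues it is nonzero because the coefficients $a, b$ are positive, whereas you exclude the fully degenerate case by contradiction with the sign change already obtained from Lemma~\ref{lem:inside-outside}, which is arguably the more robust of the two arguments.
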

\begin{proof}
  We start with the non-degenerate case and show that the number of
  intersections is at least two and that it is at most two.  By
  Lemma~\ref{lem:ellipse-curve}, the beach ellipse of $S$ is described
  by a function $\EllipseFunction{S}{\varphi}$ that maps an angle
  $\varphi$ to the radius of the point $P \in \Ellipse{S}$ with
  $\varphi(P) = \varphi$, see Figure~\ref{fig:intersections} (right).
  Analogously, there is a function $\EllipseFunction{T}{\varphi}$ for
  $\Ellipse{T}$.  As shown in Lemma~\ref{lem:inside-outside}, there
  are two points $P_i, P_o \in \Ellipse{T}$ that are inside and
  outside of $\Ellipse{S}$, respectively.  Thus,
  $\EllipseFunction{S}{\varphi(P_i)} >
  \EllipseFunction{T}{\varphi(P_i)}$ and
  $\EllipseFunction{S}{\varphi(P_o)} <
  \EllipseFunction{T}{\varphi(P_o)}$.  Since
  $\EllipseFunction{S}{\varphi}, \EllipseFunction{T}{\varphi}$ are
  continuous and periodic functions with a period of $2\pi$ (see
  Lemma~\ref{lem:ellipse-curve}), we can apply the intermediate value
  theorem to conclude that there are at least two values
  $\varphi \in [0, 2\pi)$ such that
  $\EllipseFunction{S}{\varphi} = \EllipseFunction{T}{\varphi}$.
  
  We now show that there are at most two intersections.  The angular
  coordinates of all intersections are obtained by solving
  $\EllipseFunction{S}{\varphi} = \EllipseFunction{T}{\varphi}$ for
  $\varphi$.  By Lemma~\ref{lem:ellipse-curve}, this happens when
  \begin{align*}
    \frac{\cosh(\SweepRadius) - \cosh(r(S))}{\sinh(\SweepRadius) - \sinh(r(S))\cos(\varphi - \varphi(S))} = \frac{\cosh(\SweepRadius) - \cosh(r(T))}{\sinh(\SweepRadius) - \sinh(r(T))\cos(\varphi - \varphi(T))}.
  \end{align*}
  Note that solving this equation for $\varphi$ is equivalent to
  finding the roots of the function
    $f(\varphi) = a \cos(\varphi - \varphi(S)) - b \cos(\varphi - \varphi(T)) + c$,
  where the constants $a, b$, and $c$ are defined as
  \begin{align*}
    a &= \left( \cosh(\SweepRadius) - \cosh(r(T)) \right) \sinh(r(S)),~\text{and}~b = \left( \cosh(\SweepRadius) - \cosh(r(S)) \right) \sinh(r(T)),~\text{and} \\
    c &= \left( \cosh(r(T)) - \cosh(r(S)) \right) \sinh(\SweepRadius).
  \end{align*}
  Note that $f(\varphi)$ is the sum of two differently phased cosine
  functions of equal frequency with amplitudes $a$ and $b$,
  respectively, together with a constant $c$.  Since the sum of two
  sinusoids of the same frequency is another sinusoid, we can use
  \cite[Equation (6)]{l-sts-11}, to conclude that
    $f(\varphi) = a' \cos(\varphi + \xi) + c$,
  where the constants $a'$ and $\xi$ are given by
  \begin{align*}
    a' &= \sqrt{ \left( a \cos(\varphi(S)) + b \cos(\varphi(T)) \right)^2 + \left( a \sin(\varphi(S)) + b \sin(\varphi(T)) \right)^2}, \\
    \xi &= \atan \left( \frac{a \sin(\varphi(S)) + b \sin(\varphi(T))}{a \cos(\varphi(S)) + b \cos(\varphi(T))} \right),
  \end{align*}
  and $a, b,$ and $c$ are defined as above.  Consequently,
  $f(\varphi)$ is a cosine function with amplitude $a'$ and period
  $2\pi$, which has at most two roots in $[0, 2\pi)$ if $a' \neq 0$.
  Therefore, it remains to show that $a'$ is non-zero.  First note
  that $a'$ only vanishes if both sums in the quadratic functions do.
  Since there exists no $x \in \mathbb{R}$ such that $\cos(x) = 0$
  \emph{and} $\sin(x) = 0$, it follows that $a'$ is non-zero as long
  as $a$ and $b$ are non-zero.  Since $\Ellipse{S}$ and $\Ellipse{T}$
  are non-degenerate, we know that $r(S), r(T) < \SweepRadius$, and
  since $\cosh(x)$ is strictly increasing for $x \ge 0$, we have
  \begin{align*}
    \cosh(\SweepRadius) - \cosh(r(S)) > 0~\text{and}~\cosh(\SweepRadius) - \cosh(r(T)) > 0.
  \end{align*}
  Moreover, since $r(S), r(T) > 0$ by assumption, it follows that
  $\sinh(r(S)), \sinh(r(T)) > 0$, which concludes the proof of the
  non-degenerate case.
  
  If $\Ellipse{S}$ is degenerate, then
  $\Ellipse{S} = \LineSegment{O}{S}$ and all intersections in
  $\Intersection{\Ellipse{S}}{\Ellipse{T}}$ have angular coordinate
  $\varphi(S)$.  Thus, by Lemma~\ref{lem:ellipse-curve}, there is only
  one point $P \in \Ellipse{T}$ satisfying $\varphi(P) = \varphi(S)$.
  Analogously, there is only one intersection when $\Ellipse{T}$ is
  degenerate but $\Ellipse{S}$ is not.  Finally, when both sites are
  degenerate, i.e., $r(S) = r(T) = \SweepRadius$, then
  $\varphi(S) \neq \varphi(T)$, since both sites are assumed to be
  distinct.  In that case, the two ellipses
  $\Ellipse{S} = \LineSegment{O}{S}$ and
  $\Ellipse{T} = \LineSegment{O}{T}$ intersect only in the pole.
\end{proof}
We continue by investigating how the intersections move as the sweep
circle expands.  To this end, we first show that the beach ellipses
expand as well.

\begin{lemma}
  \label{lem:ellipse-expansion}
  Let $S \in \SitesInSweep$ be a site and let $r_1, r_2$ be two radii
  with $r(S) \le r_1 < r_2$.  Then, $\Ellipse{S}$ at $\SweepRadius =
  r_1$ is inside of $\Ellipse{S}$ at $\SweepRadius = r_2$.
\end{lemma}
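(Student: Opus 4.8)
The plan is to unfold what it means for one beach ellipse to be \emph{inside} another in terms of the point-wise notion introduced just before the lemma: I will show that every point lying on $\Ellipse{S}$ at $\SweepRadius = r_1$ is inside $\Ellipse{S}$ at $\SweepRadius = r_2$, i.e., strictly closer to $S$ than to the sweep circle of radius $r_2$. Since the point-wise ``inside'' relation is exactly the definition already fixed for a point and a beach ellipse, establishing it for all points of the smaller ellipse is precisely the assertion of the lemma.

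The key step is short and requires no computation at all. Let $P$ be an arbitrary point on $\Ellipse{S}$ at $\SweepRadius = r_1$. By definition of the beach ellipse, $P$ is equidistant from $S$ and the sweep circle of radius $r_1$; as that circle is centered at the pole and $P$ lies inside it, the distance from $P$ to the circle is $r_1 - r(P)$, so $\Dist{P}{S} = r_1 - r(P)$. In particular $r(P) \le r_1 < r_2$, so $P$ also lies inside the sweep circle of radius $r_2$, whose distance to $P$ is therefore $r_2 - r(P)$. Because $r_2 > r_1$, I then get
\[
  r_2 - r(P) > r_1 - r(P) = \Dist{P}{S},
\]
which says exactly that $P$ is strictly closer to $S$ than to the sweep circle of radius $r_2$, i.e., $P$ is inside $\Ellipse{S}$ at $\SweepRadius = r_2$. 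As $P$ was arbitrary, the whole ellipse at $r_1$ lies inside the ellipse at $r_2$. This argument also covers the degenerate boundary case $r_1 = r(S)$, where $\Ellipse{S} = \LineSegment{O}{S}$: its points still satisfy $\Dist{P}{S} = r_1 - r(P)$ and $r(P) \le r_1 < r_2$, so the same inequality applies verbatim.

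Since the geometric argument is self-contained, I expect no real obstacle; the only care needed is to justify the formula $r_i - r(P)$ for the distance to each sweep circle, which is immediate from $r(P) \le r_1 < r_2$. For completeness I would note an alternative route through the explicit parameterization of Lemma~\ref{lem:ellipse-curve}: fixing an angle $\varphi$, it suffices that $\EllipseFunction{S}{\varphi}$ be strictly increasing in $\SweepRadius$, and since $\atanh$ is strictly increasing this reduces to monotonicity of the inner fraction, whose derivative has numerator $\cosh(r(S))\cosh(\SweepRadius) - \sinh(r(S))\sinh(\SweepRadius)\cos(\varphi - \varphi(S)) - 1$ after simplifying with $\cosh^2 - \sinh^2 = 1$; by Equation~\ref{eq:hyperbolic-distance} this equals $\cosh(\Dist{P}{S}) - 1$ for $P$ at radius $\SweepRadius$ and angle $\varphi$, which is positive whenever $P \neq S$. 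The drawback of this route is the quotient-rule bookkeeping and verifying that the denominator stays positive, both of which the geometric argument avoids entirely, so I would present the geometric argument as the proof.
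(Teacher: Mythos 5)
Your proof is correct and is essentially identical to the paper's: both take an arbitrary point $P \in \Ellipse{S}$ at $\SweepRadius = r_1$, use $\Dist{P}{S} = r_1 - r(P) < r_2 - r(P)$, and conclude that $P$ is inside the expanded ellipse. The extra remarks on the degenerate case and the alternative analytic route are fine but not needed.
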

\begin{proof}
  Consider a point $P \in \Ellipse{S}$ as $\SweepRadius = r_1$.  Then,
  the distance between $P$ and the sweep circle is given by
  $r_1 - r(P)$ and is equal to $\Dist{P}{S}$.  At
  $\SweepRadius = r_2$, $\Dist{P}{S}$ remains unchanged.  However,
  then the distance between $P$ and the sweep circle increases to
  $r_2 - r(P) > r_1 - r(P) = \Dist{P}{S}$.  It follows that $P$ is
  closer to $S$ than to the sweep circle at $\SweepRadius = r_2$ and
  is therefore inside~$\Ellipse{S}$.
\end{proof}

We are now ready to show that the two beach ellipse intersections of a
pair of sites $S$ and~$T$ start at the same point and move along the
bisector $\Bisector{S}{T}$ in opposite directions as the radius
$\SweepRadius$ of the sweep circle increases.

\begin{lemma}
  \label{lem:intersections-move}
  Let $S \neq T \in \SitesInSweep$ be two sites with $r(S) \le r(T)$
  and consider the intersections
  $I, I' \in \Intersection{\Ellipse{S}}{\Ellipse{T}}$ and the point
  $P = \Intersection{\LineSegment{O}{T}}{\Bisector{S}{T}}$.  For
  $\SweepRadius = r(T)$, we have $I = I' = P$.  For
  $\SweepRadius > r(T)$, $I$ and $I'$ are on opposite sides of
  $\Line{O}{T}$.  As $\SweepRadius$ increases, so do $\Dist{P}{I}$ and
  $\Dist{P}{I'}$.
\end{lemma}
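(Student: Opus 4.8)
The plan is to reduce all three claims to the behavior of a single function along the geodesic $\Bisector{S}{T}$. Recall that every point of $\Intersection{\Ellipse{S}}{\Ellipse{T}}$ lies on $\Bisector{S}{T}$, and that a point $Q$ lies on $\Ellipse{S}$ exactly when $\Dist{Q}{S} = \SweepRadius - r(Q)$, i.e.\ when $\Dist{Q}{S} + r(Q) = \SweepRadius$. Since $r(Q) = \Dist{Q}{O}$ and since $\Dist{Q}{S} = \Dist{Q}{T}$ for $Q \in \Bisector{S}{T}$, I would work with the function
\[
  g(Q) = \Dist{Q}{O} + \Dist{Q}{S} = \Dist{Q}{O} + \Dist{Q}{T},
\]
so that a point $Q \in \Bisector{S}{T}$ is a beach-ellipse intersection at sweep radius $\SweepRadius$ if and only if $g(Q) = \SweepRadius$. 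The point $P$ is the intersection of the geodesics carrying $\LineSegment{O}{T}$ and $\Bisector{S}{T}$; these geodesics are genuinely distinct, since their coincidence would put $T$ on $\Bisector{S}{T}$ and hence force $S = T$, so they meet in exactly one point and cross transversally there.

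For the first claim, at $\SweepRadius = r(T)$ the ellipse $\Ellipse{T}$ degenerates to $\LineSegment{O}{T}$, and Lemma~\ref{lem:ellipse-intersections} guarantees a single intersection point. This point lies on $\Ellipse{T} = \LineSegment{O}{T}$ and, being an intersection, on $\Bisector{S}{T}$; hence it equals $P$ (simultaneously establishing that $P$ exists), giving $I = I' = P$. For the second claim ($\SweepRadius > r(T)$), I would invoke Lemma~\ref{lem:inside-outside}: the point of $\Ellipse{T}$ at angle $\varphi(T)$ is outside $\Ellipse{S}$ while the one at angle $\varphi(T) + \pi$ is inside, so the difference $\EllipseFunction{S}{\varphi} - \EllipseFunction{T}{\varphi}$ changes sign on each of the two open arcs $(\varphi(T), \varphi(T)+\pi)$ and $(\varphi(T)+\pi, \varphi(T)+2\pi)$. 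These arcs are precisely the two open half-planes bounded by $\Line{O}{T}$, and since Lemma~\ref{lem:ellipse-intersections} yields exactly two intersections, there is exactly one in each, placing $I$ and $I'$ on opposite sides of $\Line{O}{T}$.

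The third claim is the heart of the argument and I would handle it through $g$. Because $P$ lies on the segment $\LineSegment{O}{T}$, we have $g(P) = \Dist{P}{O} + \Dist{P}{T} = \Dist{O}{T} = r(T)$, which is the global minimum of $\Dist{\cdot}{O} + \Dist{\cdot}{T}$ over the whole plane, attained (by the triangle inequality) exactly on $\LineSegment{O}{T}$; hence $P$ minimizes $g$ along the bisector. The key structural fact is that $g$ is \emph{strictly convex} along the geodesic $\Bisector{S}{T}$: the hyperbolic distance to a fixed point is a strictly convex function of a point moving along a geodesic, which I would verify from the hyperbolic law of cosines (Equation~\ref{eq:law-of-cosines}) via the Pythagorean relation $\cosh(\Dist{A}{Q}) = \cosh(\Dist{A}{F})\cosh(\Dist{F}{Q})$ for the foot $F$ of the perpendicular from $A$, and $g$ is a sum of two such functions. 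A strictly convex function on a line with minimum at $P$ increases strictly as one moves away from $P$ in either direction. Since the intersection $I$ (and likewise $I'$) satisfies $g(I) = \SweepRadius$ and stays on a fixed side of $P$, increasing $\SweepRadius$ forces $I$ to a point of strictly larger $g$-value, hence strictly larger $\Dist{P}{I}$, and the same reasoning applies to $\Dist{P}{I'}$.

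I expect the convexity of the hyperbolic distance function along a geodesic to be the main obstacle, as it is the only ingredient not immediately reducible to earlier results; everything else is bookkeeping built on Lemmas~\ref{lem:ellipse-curve}, \ref{lem:inside-outside}, and~\ref{lem:ellipse-intersections} together with the observation that beach-ellipse intersections lie on $\Bisector{S}{T}$.
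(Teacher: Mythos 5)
Your argument is correct, and for the first two claims it coincides with the paper's proof: the degenerate ellipse $\Ellipse{T} = \LineSegment{O}{T}$ at $\SweepRadius = r(T)$ pins down $I = I' = P$ via Lemma~\ref{lem:ellipse-intersections}, and the sign change of $\EllipseFunction{S}{\varphi} - \EllipseFunction{T}{\varphi}$ from Lemma~\ref{lem:inside-outside} places the two intersections on opposite sides of $\Line{O}{T}$. For the third claim your route genuinely differs. The paper argues by containment: by Lemma~\ref{lem:ellipse-expansion} the region bounded by $\Ellipse{T}$ at radius $r_1$ is nested inside the one at radius $r_2 > r_1$, and $P$ lies on the bisector between the two intersections, so if $\Dist{P}{I}$ failed to increase then $I_{r_2} \in \LineSegment{P}{I_{r_1}}$ would sit strictly inside the larger ellipse, contradicting $I_{r_2} \in \Ellipse{T}$. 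You instead observe that the intersections at radius $\SweepRadius$ are exactly the solutions of $g(Q) = \Dist{Q}{O} + \Dist{Q}{T} = \SweepRadius$ on $\Bisector{S}{T}$, that $g$ attains its minimum $r(T)$ at $P$, and that $g$ is strictly convex along the bisector, so each level set consists of one point per side moving monotonically away from $P$ as the level rises. Your version is more structural: it recovers in one stroke the count of two intersections, their placement on either side of $P$, and their strict outward motion, and it would also show they sweep out the whole bisector as $\SweepRadius \to \infty$. The price is the one ingredient the paper never needs, namely strict convexity of the hyperbolic distance function along a geodesic; this is standard and your Pythagorean-relation computation does establish it, but note the small caveat that $\Dist{\cdot}{O}$ is only \emph{weakly} convex along $\Bisector{S}{T}$ in the case $r(S) = r(T)$, where $O$ lies on the bisector --- strict convexity of the sum is nevertheless saved because $T \notin \Bisector{S}{T}$ always, so $\Dist{\cdot}{T}$ is strictly convex there. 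The paper's argument, by contrast, reuses only machinery already built (ellipse expansion) and so stays self-contained at the cost of a slightly more ad hoc contradiction.
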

\begin{proof}
  For $\SweepRadius = r(T)$, the beach ellipse $\Ellipse{T}$ is
  degenerate and consists of the line segment $\LineSegment{O}{T}$.
  By Lemma~\ref{lem:ellipse-intersections}, the two ellipses
  $\Ellipse{S}$ and $\Ellipse{T}$ intersect in a single point.  This
  point is $P$ as the intersection is on $\Ellipse{T} =
  \LineSegment{O}{T}$ and has by definition equal distance to $S$ and
  $T$, i.e., it lies on $\Bisector{S}{T}$.  Moreover, this point is
  unique, as $\LineSegment{O}{T} \not\subset \Bisector{S}{T}$, since
  $T \in \LineSegment{O}{T}$ but $T \notin \Bisector{S}{T}$.
  
  We continue with the proof that $I$ and $I'$ are on opposite sides
  of $\Line{O}{T}$ for $\SweepRadius > r(T)$, as shown in
  Figure~\ref{fig:intersections} (left).  Without loss of generality,
  assume that $\varphi(T) = 0$ and consider the points
  $P_i, P_o \in \Ellipse{T}$ as defined in
  Lemma~\ref{lem:inside-outside}, that are inside and outside of
  $\Ellipse{S}$, respectively.  That is,
  $\EllipseFunction{T}{0} > \EllipseFunction{S}{0}$ and
  $\EllipseFunction{T}{\pi} < \EllipseFunction{S}{\pi}$.
  Consequently, one of $I$ and $I'$ has an angular coordinate in
  $(0, \pi)$ and the other in $(\pi, 2\pi)$.  Thus, they are on
  opposite sides of $\Line{O}{T}$.
  
  It remains to show that $\Dist{P}{I}$ and $\Dist{P}{I'}$ increase
  with $\SweepRadius$.  Consider two radii $r_1, r_2$ with
  $r(T) \le r_1 < r_2$.  Let $I_{r_1}$ and $I_{r_2}$ denote the
  positions that the intersection $I$ has at $\SweepRadius = r_1$ and
  $\SweepRadius = r_2$, respectively, and let $I'_{r_1}$ and
  $I'_{r_2}$ be defined analogously.  Consider $\SweepRadius = r_1$
  first.  By Lemma~\ref{lem:ellipse-intersections} the bisector
  $\Bisector{S}{T}$ intersects $\Ellipse{T}$ at exactly the points
  $I_{r_1}$ and $I'_{r_1}$, meaning the line segment
  $\LineSegment{I_{r_1}}{I'_{r_1}}$ lies in $\Ellipse{T}$.  Moreover,
  the point $P$ lies in $\Ellipse{T}$ and on $\Bisector{S}{T}$,
  meaning $P \in \LineSegment{I_{r_1}}{I'_{r_1}}$ and thus the line
  segments $\LineSegment{P}{I_{r_1}}$ and $\LineSegment{P}{I'_{r_1}}$
  are contained in $\Ellipse{T}$.  Now assume that $\SweepRadius$
  increases to $\SweepRadius = r_2$ but $\Dist{P}{I}$ does not.  (The
  proof for $\Dist{P}{I'}$ is analogous.)  Then
  $I_{r_2} \in \LineSegment{P}{I_{r_1}}$.  However, since the beach
  ellipse $\Ellipse{T}$ at $\SweepRadius = r_1$ is completely
  contained in $\Ellipse{T}$ at $\SweepRadius = r_2$
  (Lemma~\ref{lem:ellipse-expansion}), it follows that
  $I_{r_2} \in \LineSegment{P}{I_{r_1}}$ is inside of $\Ellipse{T}$,
  contradicting the fact that $I_{r_2} \in \Ellipse{T}$.
\end{proof}

The above lemma has some interesting implications.  Consider two sites
$S, T$ that are incident to a Voronoi vertex $V$, meaning they lie on
the arc of the witness circle of $V$.  Then,~$V$ lies on the bisector
$\Bisector{S}{T}$ and, in particular, on one side of the point
$P \in \Bisector{S}{T}$ defined in the lemma.  As the sweep circle
expands, the intersections $I$ and $I'$ move along the bisector and
away from $P$, meaning exactly one of them reaches $V$ eventually.
More precisely, the following lemma captures from which directions the
intersections approach a Voronoi vertex.  Recall that
$\Angle{A}{B}{C}$ denotes the angle between $\Ray{B}{A}$ and
$\Ray{B}{C}$ in clockwise direction around $B$.

\begin{lemma}
  \label{lem:intersection-direction}
  Let $V \in \Set{V}$ be a Voronoi vertex with far point $V'$ and
  incidence tuple $(S_1, S_2, S_3)$.  Further, let $(S, S') \in
  \{(S_1, S_2), (S_2, S_3), (S_1, S_3)\}$, let $\SingleRay{V}$ be the
  angular bisector of $\Angle{S}{V}{S'}$, and let $I \in
  \Intersection{\Ellipse{S}}{\Ellipse{S'}}$ be the intersection with
  $I = V$ at $\SweepRadius = r(V')$.  Then, $I \in \SingleRay{V}$ for
  $\SweepRadius < r(V')$.
\end{lemma}
\begin{proof}
  \begin{figure}[t]
    \centering
    \includegraphics{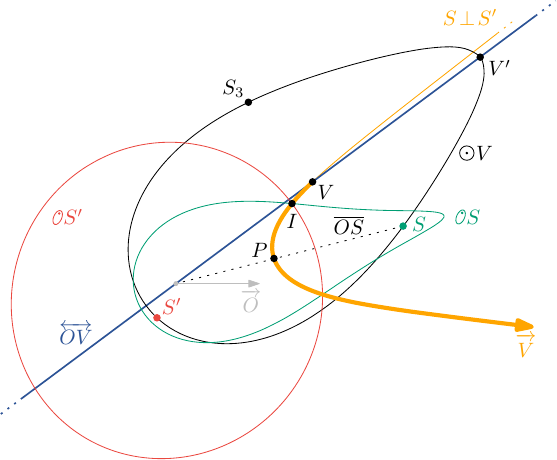}
    \caption{Illustration of the proof of
      Lemma~\ref{lem:intersection-direction}.  The sites $S$ and $S'$
      lie on the witness circle of~$V$.  The intersection
      $I \in \Intersection{\Ellipse{S}}{\Ellipse{S'}}$ that moves
      towards $V$ lies on the angular bisector $\SingleRay{V}$
      (orange) of the angle $\Angle{S}{V}{S'}$, since $\SingleRay{V}$,
      $I, P$, and $S$ are on the same side of $\Line{O}{V}$.}
    \label{fig:directions}
  \end{figure}

  Note that $V$ divides $\Bisector{S}{S'}$ into the two rays
  $\SingleRay{V}$ and $\Bisector{S}{S'} \setminus \SingleRay{V}$ and
  so does the line $\Line{O}{V}$.  Thus, since
  $I \in \Bisector{S}{S'}$, it suffices to show that $I$ and
  $\SingleRay{V}$ are on the same side of $\Line{O}{V}$.  Without loss
  of generality, assume $r(S) \ge r(S')$.  By
  Lemma~\ref{lem:intersections-move} there is a point
  $P = \Intersection{\LineSegment{O}{S}}{\Bisector{S}{S'}}$ such that
  $I \in \LineSegment{P}{V} \subset \Bisector{S}{S'}$.  Thus, $P$ and
  $I$ are on the same side of~$\Line{O}{V}$.  Moreover, since
  $P \in \LineSegment{O}{S}$, we also know that $S$ and $I$ are on the
  same side of~$\Line{O}{V}$.  Consequently, it suffices to show that
  $\SingleRay{V}$ is on the same side of $\Line{O}{V}$ as $S$, as
  shown in Figure~\ref{fig:directions}.

  If $S'$ is on the same side of $\Line{O}{V}$ as $S$, then this is
  trivially true, since $\SingleRay{V}$ is the angular bisector of
  $\Angle{S}{V}{S'}$.  Thus, let $S$ and $S'$ lie on opposite sides of
  $\Line{O}{V}$ and consider the angles $\varphi_S = \Angle{S}{V}{O}$
  and $\varphi_{S'} = \Angle{O}{V}{S'}$.  Since $\SingleRay{V}$ is the
  angle bisector of $\Angle{S}{V}{S'}$, we know that the angle between
  $\Ray{V}{S}$ and $\SingleRay{V}$ is
  $1/2 (\varphi_S + \varphi_{S'})$.  To show that $\SingleRay{V}$ is
  on the same side of $\Line{O}{V}$ as $S$, it then suffices to show
  that $1/2(\varphi_S + \varphi_{S'}) \le \varphi_S$ or equivalently
  that $\varphi_{S'} \le \varphi_{S}$.  To this end, we make use of
  the hyperbolic law of cosines.

  Consider the triangles $\Triangle{O}{V}{S}$ and
  $\Triangle{O}{V}{S'}$ and note that $\varphi_S$ and $\varphi_{S'}$
  are the angles at $V$, respectively.  Moreover, since $S$ and $S'$
  lie on the witness circle of $V$, we have
  $\Dist{V}{S} = \Dist{V}{S'}$.  If $r(S') = r(S)$, the side lengths
  of the triangles match and we have $\varphi_S = \varphi_{S'}$ by
  Equation~\eqref{eq:law-of-cosines}.  If $r(S') < r(S)$, i.e., if
  $\Dist{O}{S'} < \Dist{O}{S}$, we can apply
  Lemma~\ref{lem:angle-affects-distance} to conclude that
  $\varphi_{S'} < \varphi_S$.
\end{proof}

As a corollary of the above lemma, we can conclude that before the
sweep circle radius reaches the far point $V'$ of $V$, there are two
intersections whose distance to the sites of the intersecting beach
ellipses is smaller than to the third site, as shown in
Figure~\ref{fig:direction-distance}.

\begin{figure}[t]
  \centering
  \includegraphics{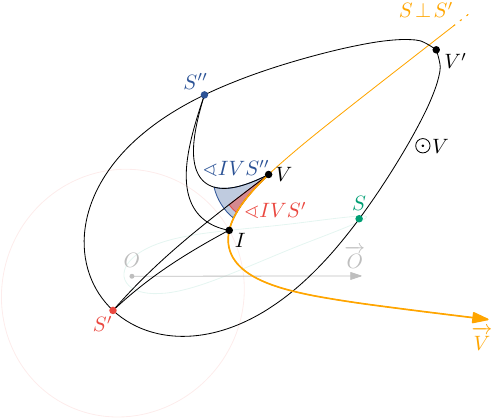}
  \caption{Illustration of Corollary~\ref{cor:third-site-distinct}.
    The distance from $I$ to $S'$ is smaller than the one to $S''$.}
  \label{fig:direction-distance}
\end{figure}

\begin{corollary}
  \label{cor:third-site-distinct}
  Let $V \in \Set{V}$ be a Voronoi vertex with far point $V'$ and
  incidence tuple $(S_1, S_2, S_3)$.  Further, let
  $(S, S', S'') \in \{(S_1, S_2, S_3), (S_2, S_3, S_1)\}$ and consider
  the intersection $I \in \Intersection{\Ellipse{S}}{\Ellipse{S'}}$
  with $I = V$ at $\SweepRadius = r(V')$.  Then,
  $\Dist{I}{S''} > \Dist{I}{S} = \Dist{I}{S'}$ for
  $\SweepRadius < r(V')$.
\end{corollary}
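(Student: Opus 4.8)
The plan is to split the claim into the equality $\Dist{I}{S} = \Dist{I}{S'}$ and the strict inequality $\Dist{I}{S''} > \Dist{I}{S'}$. The equality is immediate and in fact holds for every sweep radius: since $I \in \Intersection{\Ellipse{S}}{\Ellipse{S'}}$, the point $I$ lies on the perpendicular bisector $\Bisector{S}{S'}$ (as noted in the observation in Section~\ref{sec:beach-curve}) and is therefore equidistant to $S$ and $S'$. It thus remains to compare $\Dist{I}{S'}$ with $\Dist{I}{S''}$.

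For the inequality I would reduce the distance comparison to an angle comparison at $V$ via Lemma~\ref{lem:angle-affects-distance}. The three incident sites lie on the witness circle of $V$, so $\Dist{V}{S'} = \Dist{V}{S''}$, both equal to its radius. Consequently the triangles $\Triangle{I}{V}{S'}$ and $\Triangle{I}{V}{S''}$ share the side $\LineSegment{I}{V}$ and have equal second side at $V$, so by Lemma~\ref{lem:angle-affects-distance} it suffices to prove $\Angle{I}{V}{S''} > \Angle{I}{V}{S'}$. Moreover, by Lemma~\ref{lem:intersection-direction}, for $\SweepRadius < r(V')$ the intersection $I$ lies on the angular bisector $\SingleRay{V}$ of $\Angle{S}{V}{S'}$, so $\Angle{I}{V}{S'}$ equals half of $\Angle{S}{V}{S'}$ and, in particular, equals $\Angle{I}{V}{S}$; note this holds for either of the two bisector rays, since both make equal angles with $\Ray{V}{S}$ and $\Ray{V}{S'}$.

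The remaining, and main, step is $\Angle{I}{V}{S''} > \Angle{I}{V}{S'}$. Here I would exploit that the incidence tuple $(S_1, S_2, S_3)$ orders the sites clockwise around $V$ starting from the far point $V'$. Measuring directions from $V$ by the clockwise angle to $\Ray{V}{V'}$ assigns increasing coordinates to $S_1, S_2, S_3$ and yields four positive gaps summing to $2\pi$, and the side condition in Lemma~\ref{lem:intersection-direction} identifies $\SingleRay{V}$ as the bisector pointing into the short arc between the consecutive sites $S$ and $S'$, i.e.\ the midpoint of their gap. Writing $\Angle{I}{V}{S'}$ (half that gap) and the clockwise angle from $I$ to $S''$ in terms of the gaps, a short case check over $(S, S', S'') \in \{(S_1, S_2, S_3), (S_2, S_3, S_1)\}$ gives the inequality. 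The subtlety to watch is that $\Angle{\cdot}{V}{\cdot}$ is an unsigned triangle angle in $[0, \pi]$: when the clockwise difference to $S''$ exceeds $\pi$ one must take its $2\pi$-complement, and then the inequality reduces to the fact that the two gaps spanned by the three sites sum to strictly less than $2\pi$ (as the remaining gaps around $V'$ are positive). Combining both parts yields $\Dist{I}{S''} > \Dist{I}{S'} = \Dist{I}{S}$; the boundary case where $I, V, S''$ are collinear is covered by the triangle-inequality remark following Lemma~\ref{lem:angle-affects-distance}.
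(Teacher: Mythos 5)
Your proposal is correct and follows essentially the same route as the paper's proof: reduce the distance comparison to an angle comparison at $V$ via Lemma~\ref{lem:angle-affects-distance} using $\Dist{V}{S'} = \Dist{V}{S''}$ from the witness circle, and locate $I$ between $\Ray{V}{S}$ and $\Ray{V}{S'}$ via Lemma~\ref{lem:intersection-direction}. The only difference is presentational: where the paper dispatches the final step $\Angle{I}{V}{S'} < \Angle{I}{V}{S''}$ with a one-line appeal to the incidence-tuple ordering (plus a WLOG on which neighbouring angle is smaller), you spell out the clockwise gap bookkeeping and explicitly handle the reflex-angle and degenerate collinear cases, which the paper leaves implicit.
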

\begin{proof}
  Without loss of generality, assume that
  $\Angle{S'}{V}{S''} \le \Angle{S''}{V}{S}$.  Consider the two
  triangles $\Triangle{I}{V}{S'}$ and $\Triangle{I}{V}{S''}$, as
  illustrated in Figure~\ref{fig:direction-distance}, and note that
  $\Dist{V}{S'} = \Dist{V}{S''}$, since $S'$ and $S''$ lie on the
  witness circle of $V$.  By Lemma~\ref{lem:intersection-direction},
  we know that $I$ lies on the angle bisector $\SingleRay{V}$ of
  $\Angle{S}{V}{S'}$, meaning $I$ lies between $\Ray{V}{S}$ and
  $\Ray{V}{S'}$ in clockwise direction around $V$.  By definition of
  the incidence tuple, we know that $S''$ does not lie between
  $\Ray{V}{S}$ and $\Ray{V}{S'}$ in clockwise direction, meaning
  $\Angle{I}{V}{S'} < \Angle{I}{V}{S''}$.  Consequently, we can apply
  Lemma~\ref{lem:angle-affects-distance} to conclude that
  $\Dist{I}{S} = \Dist{I}{S'} < \Dist{I}{S''}$.
\end{proof}

Finally, we investigate how we can use the beach ellipse intersections
to predict circle events and how to distinguish between true and false
ones (see Section~\ref{sec:event-queue}).  Predicting the event is
straightforward, as we only need to compute the intersection $P$ of
the two bisectors corresponding to two beach ellipse intersections
that are consecutive\footnote{We argue in the proof of
  Lemma~\ref{lem:vertices-through-circle-events} that it suffices to
  consider consecutive intersections.} on the beach curve.  We note
that $P$ may not exist, in which case no circle event is predicted.
If it does exist, we need to determine whether the two beach ellipse
intersections converge towards $P$ as the sweep circle expands.  To
this end, let $S_1$ and $S_2$ be two sites and recall that an
intersection $I \in \Intersection{\Ellipse{S_1}}{\Ellipse{S_2}}$ moves
away from the line through the pole and the site with the larger
radius (Lemma~\ref{lem:intersections-move}).  We call this site the
\emph{dominant site} of $I$.  The following lemma now says that, as
the sweep circle expands, two beach ellipse intersections meet at a
point $P$ (predicting a true circle event), if $P$ and the
intersections are on the same side of the lines through the pole and
the dominant sites.  See Figure~\ref{fig:intersections-converge} for
an illustration.

\begin{lemma}
  \label{lem:true-false-circle-detection}
  Let $S_1, S_2, S_3 \in \SitesInSweep$ be distinct and let
  $P \in \Hplane$ lie at distance $d$ to them.  Further, let
  $r < r(P) + d$ be such that
  $I \in \Intersection{\Ellipse{S_1}}{\Ellipse{S_2}}$ and
  $I' \in \Intersection{\Ellipse{S_2}}{\Ellipse{S_3}}$ are distinct at
  $\SweepRadius = r$ and let $S$ and $S'$ be their dominant sites,
  respectively.  Then, $P = I = I'$ at $\SweepRadius = r(P) + d$, if
  and only if $P$ and $I$ (resp. $I'$) are on the same side of
  $\Line{O}{S}$ (resp. $\Line{O}{S'}$) at $\SweepRadius = r$.
\end{lemma}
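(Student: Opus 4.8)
The plan is to reduce the statement to the motion of beach-ellipse intersections established in Lemma~\ref{lem:intersections-move}, using the decisive observation that $P$ is itself a beach-ellipse intersection at exactly the radius $\SweepRadius = r(P) + d$. Since $P$ has distance $d$ to each of $S_1, S_2, S_3$, it lies on all three bisectors $\Bisector{S_1}{S_2}$, $\Bisector{S_2}{S_3}$, and $\Bisector{S_1}{S_3}$. Furthermore, at $\SweepRadius = r(P) + d$ we have $\Dist{P}{S_i} = d = (r(P) + d) - r(P) = \SweepRadius - r(P)$ for each $i$, which is precisely the defining condition for $P \in \Ellipse{S_i}$. Hence, at this radius, $P$ lies on $\Intersection{\Ellipse{S_1}}{\Ellipse{S_2}}$ and on $\Intersection{\Ellipse{S_2}}{\Ellipse{S_3}}$, so it is one of the (at most two) intersection points of each of these pairs.

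Next I would pin down on which side of the relevant line $P$ sits. Let $S$ be the dominant site of $I$, i.e., the one of $S_1, S_2$ with the larger radius. The triangle inequality gives $r(S) \le r(P) + \Dist{P}{S} = r(P) + d$, and this is strict unless $P$ lies on $\Line{O}{S}$; excluding this degenerate configuration, both ellipses are non-degenerate at $\SweepRadius = r(P) + d$, so by Lemma~\ref{lem:ellipse-intersections} the pair $\Ellipse{S_1}, \Ellipse{S_2}$ has exactly two intersections there. By Lemma~\ref{lem:intersections-move}, these two intersections lie on opposite sides of $\Line{O}{S}$, and each one stays on its own side as $\SweepRadius$ varies. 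Consequently $P$, being one of the two intersections, lies strictly on one side of $\Line{O}{S}$ and equals precisely the intersection that shares that side.

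I would then conclude by invoking the side-invariance once more. Because the side of $I$ relative to $\Line{O}{S}$ does not change with $\SweepRadius$, the side of $I$ at $\SweepRadius = r$ is the same as at $\SweepRadius = r(P) + d$. Thus $I = P$ at $\SweepRadius = r(P) + d$ exactly when $I$ lies on the same side of $\Line{O}{S}$ as $P$: in that case $I$ is the unique intersection on $P$'s side and therefore equals $P$, whereas otherwise $I$ is the other, distinct intersection. Repeating the argument verbatim for the pair $S_2, S_3$ and its dominant site $S'$ shows that $I' = P$ at $\SweepRadius = r(P) + d$ if and only if $I'$ and $P$ share a side of $\Line{O}{S'}$. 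Combining the two equivalences yields the claim, since $P = I = I'$ holds at $\SweepRadius = r(P) + d$ precisely when both $I = P$ and $I' = P$ do.

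I expect the main obstacle to be conceptual rather than computational: the crux is to recognize that the intersection converging to $P$ is singled out purely by its side of $\Line{O}{S}$, which rests entirely on the opposite-sides structure and the side-invariance supplied by Lemma~\ref{lem:intersections-move}, together with the direct verification that $P$ is an honest beach-ellipse intersection at $\SweepRadius = r(P) + d$. The one genuinely delicate point is the degenerate case $P \in \Line{O}{S}$ (equivalently $r(S) = r(P) + d$), in which $P$ coincides with the common origin of the two intersections in Lemma~\ref{lem:intersections-move} and does not lie strictly on either side; I would handle this as a limiting case or exclude it by a general-position assumption.
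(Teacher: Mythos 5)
Your proof is correct and follows essentially the same route as the paper's: observe that $P$ lies on all three beach ellipses at $\SweepRadius = r(P)+d$, hence is one of their pairwise intersections, and combine this with the side-invariance of intersections relative to $\Line{O}{S}$ supplied by Lemma~\ref{lem:intersections-move}. You are in fact slightly more explicit than the paper about the ``only if'' direction (identifying $P$ as the unique intersection on its side of $\Line{O}{S}$, so that an intersection on the opposite side cannot converge to $P$) and about the degenerate case $r(S) = r(P)+d$, which the paper does not address.
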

\begin{proof}
  We give the proof for $I$ and $\Line{O}{S}$.  The one for $I'$ and
  $\Line{O}{S'}$ is analogous.  Note that the positions of the sites
  and thus the coordinates of $S$ and $P$ are fixed.  Consequently,
  $P$ is on the same side of $\Line{O}{S}$ at all times.  Moreover, by
  Lemma~\ref{lem:intersections-move} the intersection $I$ is on the
  same side of $\Line{O}{S}$ at all times.  It follows that, if and
  only if $P$ and $I$ are on the same side of $\Line{O}{S}$ at a given
  sweep circle radius, then this holds for all sweep circle radii.
  
  When $\SweepRadius = r(P) + d$, the sweep circle has equal distance
  to $P$ as to all sites $S_1, S_2$, and $S_3$, meaning $P$ lies on
  their beach ellipses.  In particular, we have $P = I$.  Clearly, $P$
  and $I$ lie on the same side of $\Line{O}{S}$ at that point.  By the
  above argumentation, then and only then does the same hold at
  $\SweepRadius = r$.
\end{proof}

\begin{figure}[t]
  \centering
  \includegraphics{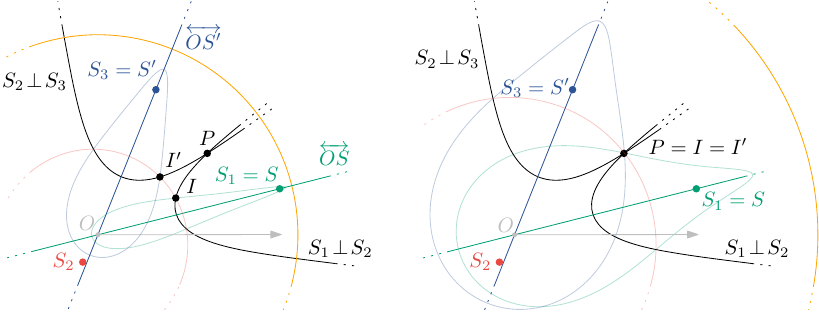}
  \caption{Illustration of
    Lemma~\ref{lem:true-false-circle-detection}.  \textbf{(Left)} The
    sweep circle (orange) has radius $\SweepRadius < r(P) + d$.  The
    point $P$ and the intersections $I$ and $I'$ are on the same sides
    of $\Line{O}{S}$ and $\Line{O}{S'}$, respectively.
    \textbf{(Right)} The intersections meet at $P$ when
    $\SweepRadius = r(P) + d$.}
  \label{fig:intersections-converge}
\end{figure}

\subsubsection{Active Beach Ellipse Segments}

In this section, we consider how the beach curve changes as the sweep
circle expands.  We start by proving the following lemma, which
characterizes when beach ellipse intersections are active, i.e., when
they are on the beach curve $\Set{B}$, as depicted in
Figure~\ref{fig:active-segments}.

\begin{figure}[t]
  \centering
  \includegraphics{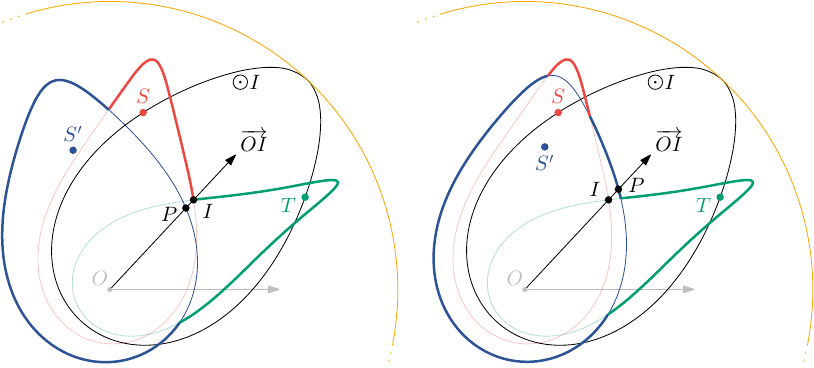}
  \caption{Illustration of the proof of
    Lemma~\ref{lem:active-means-closer}.  The circle $\Circle{I}$
    contains all points that lie at equal distance to $I$ as $S$ and
    $T$.  Note that this circle is tangent to the sweep circle
    (orange).  \textbf{(Left)} The intersection $I$ is on the beach
    curve (bold), since $S$ and $T$ are closer to $I$ than $S'$.
    \textbf{(Right)} The intersection $I$ is not on the beach curve,
    since $S'$ is closer to~$I$ than $S$ and $T$}
  \label{fig:active-segments}
\end{figure}

\begin{lemma}
  \label{lem:active-means-closer}
  Let $S, T \in \SitesInSweep$ be two sites and let
  $I \in \Intersection{\Ellipse{S}}{\Ellipse{T}}$ be an intersection
  of their ellipses.  Then, $I \in \Set{B}$ if and only if there
  exists no site $S' \in \SitesInSweep$ with $S' \neq S, T$ such that
  $\Dist{S'}{I} < \Dist{S}{I} = \Dist{T}{I}$.
\end{lemma}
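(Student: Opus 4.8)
The plan is to translate membership in the beach curve $\Set{B}$ into a statement about the beach-ellipse radius functions and then convert that into the distance condition in the claim. First I would record that, since $I$ lies on both $\Ellipse{S}$ and $\Ellipse{T}$, the definition of the beach ellipses gives $\Dist{S}{I} = \Dist{T}{I} = \SweepRadius - r(I)$. By the definition of the active segments, $I \in \Set{B}$ holds if and only if there is no site whose beach ellipse has a point at angle $\varphi(I)$ with radius strictly larger than $r(I)$. Since $\Ellipse{S}$ and $\Ellipse{T}$ each meet the ray at angle $\varphi(I)$ only in $I$ itself (Lemma~\ref{lem:ellipse-curve}), the only sites that can witness inactivity are those $S' \neq S, T$, and the witnessing condition reduces to $\EllipseFunction{S'}{\varphi(I)} > r(I)$. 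It therefore suffices to prove, for every non-degenerate $S'$, the equivalence $\EllipseFunction{S'}{\varphi(I)} > r(I)$ if and only if $\Dist{S'}{I} < \SweepRadius - r(I)$, i.e. if and only if $I$ is inside $\Ellipse{S'}$ in the sense defined above.

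To establish this equivalence, I would fix the ray from the pole at angle $\varphi(I)$ and, for $\rho \in [0, \SweepRadius]$, let $Q_\rho$ be the point on this ray of radius $\rho$ and set $g(\rho) = \Dist{Q_\rho}{S'} - (\SweepRadius - \rho)$, so that $Q_\rho$ is inside, on, or outside $\Ellipse{S'}$ exactly when $g(\rho)$ is negative, zero, or positive. By Lemma~\ref{lem:ellipse-curve} the ray meets $\Ellipse{S'}$ in the single point of radius $\rho' = \EllipseFunction{S'}{\varphi(I)}$, so $g$ has a unique zero at $\rho = \rho'$. Because $g$ is continuous it keeps a constant sign on each side of $\rho'$, and evaluating the sign as $\rho \to \SweepRadius$---where $\SweepRadius - \rho \to 0$ while $\Dist{Q_\rho}{S'}$ stays bounded away from $0$ since $r(S') < \SweepRadius$---shows $g > 0$ just inside the sweep circle. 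Hence $g(\rho) < 0$ precisely for $\rho < \rho'$, which yields that $I$ is inside $\Ellipse{S'}$ if and only if $r(I) < \EllipseFunction{S'}{\varphi(I)}$, as claimed. The degenerate case $r(S') = \SweepRadius$ can be disposed of directly, since then $\Ellipse{S'}$ is a radial segment that shares the angle of $I$ with no point outside that single ray.

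Combining the two parts finishes the argument: $I \in \Set{B}$ iff no $S' \neq S, T$ satisfies $\EllipseFunction{S'}{\varphi(I)} > r(I)$, which by the equivalence is iff no such $S'$ has $\Dist{S'}{I} < \SweepRadius - r(I) = \Dist{S}{I} = \Dist{T}{I}$. The step I expect to be the main obstacle is the sign analysis of $g$ in the second paragraph: one must justify that the single crossing of $\Ellipse{S'}$ along a fixed ray separates the inside and outside regions in the expected order, \emph{without} assuming that $\Dist{Q_\rho}{S'}$ is monotone in $\rho$ (it need not be, since $S'$ may lie ahead along the ray). Pinning the sign down through the boundary behaviour at $\rho \to \SweepRadius$, together with the uniqueness of the intersection from Lemma~\ref{lem:ellipse-curve}, is what makes the equivalence rigorous.
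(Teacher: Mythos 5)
Your proof is correct and follows essentially the same route as the paper's: both reduce $I \in \Set{B}$ to comparing $r(I)$ with $\EllipseFunction{S'}{\varphi(I)}$ for each other site $S'$ and then translate that radial comparison into the distance condition via the inside/outside dichotomy, with the only difference being that the paper asserts the equivalence between ``$r(I)$ below the ellipse radius'' and ``$I$ inside $\Ellipse{S'}$'' as immediate, whereas you prove it explicitly with the sign analysis of $g$. One small patch there: uniqueness of the zero of $g$ at $\rho'$ plus positivity on the outer side does not by itself exclude $g$ being positive on both sides, so you should also evaluate once on the inner side, e.g.\ $g(0) = r(S') - \SweepRadius < 0$, to conclude $g < 0$ for $\rho < \rho'$.
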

\begin{proof}
  We start by proving that $S'$ does not exist if $I \in \Set{B}$.  To
  this end, we show for each $S' \in \SitesInSweep$ with
  $S' \neq S, T$, that $\Dist{S'}{I} \ge \Dist{S}{I} = \Dist{T}{I}$.
  Consider the intersection
  $P \in \Intersection{\Ray{O}{I}}{\Ellipse{S'}}$.  Since
  $I \in \Set{B}$, we know that $r(I) \ge r(P)$, meaning $I$ is not
  inside the beach ellipse $\Ellipse{S'}$.  Thus, $I$ is at least as
  close to the sweep circle as to $S'$.  Since the distance between
  $I$ and the sweep circle is given by $\SweepRadius - r(I)$, it
  follows that
  $\Dist{S'}{I} \ge \SweepRadius - r(I) = \Dist{S}{I} = \Dist{T}{I}$.
  
  It remains to consider the case where $I \notin \Set{B}$.  Then,
  there exists a beach ellipse segment of another site $S' \neq S, T$
  that is active at angular coordinate $\varphi(I)$.  That is, there
  is a point $P = \Intersection{\Ray{O}{I}}{\Ellipse{S'}}$ such that
  $r(I) < r(P)$.  It follows that $I$ is inside of $\Ellipse{S'}$,
  meaning $I$ is closer to $S'$ than to the sweep circle.  We can
  conclude that
  $\Dist{S'}{I} < \SweepRadius - r(I) = \Dist{S}{I} = \Dist{T}{I}$.
\end{proof}

With the above lemma, we are now ready to investigate how changes to
the beach curve are related to the events in the queue $\Set{Q}$.
Consider the tuple $\Set{\bar{Q}}$ containing all site events and all
true circle events (see Section~\ref{sec:event-queue}).  Recall that
$V'$ denotes the far point of a Voronoi vertex $V$, i.e., the point
with the maximum radial coordinate among points on the witness circle
of $V$.  Then, $\Set{\bar{Q}} = (r_0, r_1, \dots, r_k, r_{k + 1})$
contains the radii $r_0 = 0$, $r_{k + 1} = \infty$ and all radii
$\{ r(S) \mid S \in \Set{S} \} \cup \{ r(V') \mid V \in \Set{V}\}$ in
ascending order inbetween.  We note that $\Set{\bar{Q}}$ is different
from $\Set{Q}$, since the latter also contains structure events and
circle events that are later canceled, e.g., when a site is detected
within the corresponding witness circle.

For two sites $S, T \in \SitesInSweep$, we say that a point
$P \in \Intersection{\Ellipse{S}}{\Ellipse{T}}$ \emph{enters the beach
  curve at radius $r$}, if there is an $\varepsilon > 0$ such that
$P \notin \Set{B}$ for $\SweepRadius \in [r - \varepsilon, r)$ and
$P \in \Set{B}$ when $\SweepRadius = r$.  Analogously, we say that
$P \in \Intersection{\Ellipse{S}}{\Ellipse{T}}$ \emph{leaves the beach
  curve at radius} $r$ if $P \in \Set{B}$ when $\SweepRadius = r$ and
there is an $\varepsilon > 0$ such that $P \notin \Set{B}$ for
$\SweepRadius \in (r, r + \varepsilon]$.  In the following, we show
that no beach ellipse intersection enters or leaves the beach curve
between two events in $\Set{\bar{Q}}$.

\begin{lemma}
  \label{lem:intersections-enter}
  Let $\Set{\bar{Q}}$ be the tuple of site events and true circle
  events and let $r_i, r_j \in \Set{\bar{Q}}$ be two consecutive
  events.  Then, for all $\SweepRadius \in (r_i, r_j)$ no beach
  ellipse intersection enters the beach curve.
\end{lemma}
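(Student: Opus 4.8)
The plan is to argue by contradiction. I would suppose that some intersection $I \in \Intersection{\Ellipse{S}}{\Ellipse{T}}$ enters the beach curve at a radius $r \in (r_i, r_j)$ and derive that $r$ must in fact be a true circle event, contradicting the assumption that $r_i$ and $r_j$ are consecutive in $\Set{\bar{Q}}$. The first step is to note that, since every site radius belongs to $\Set{\bar{Q}}$, no site enters the sweep circle while $\SweepRadius \in (r_i, r_j)$; hence $\SitesInSweep$ is constant on this interval and both $S$ and $T$ satisfy $r(S), r(T) \le r_i < \SweepRadius$. In particular $\Ellipse{S}$ and $\Ellipse{T}$ are non-degenerate throughout, so by Lemma~\ref{lem:ellipse-intersections} the intersection $I$ exists, and by Lemma~\ref{lem:ellipse-curve} together with Lemma~\ref{lem:intersections-move} it varies continuously with $\SweepRadius$. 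Choosing $\varepsilon$ small enough that $[r - \varepsilon, r] \subseteq (r_i, r_j)$, the whole analysis takes place over a fixed set of sites.

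Next I would use the activity criterion of Lemma~\ref{lem:active-means-closer}: with $\Dist{S}{I} = \Dist{T}{I} = \SweepRadius - r(I)$, the point $I$ lies on $\Set{B}$ exactly when no site of $\SitesInSweep$ is strictly closer to $I$ than $S$ and $T$ are. Since $I \notin \Set{B}$ for $\SweepRadius \in [r - \varepsilon, r)$, every such radius has a blocking site, and as there are only finitely many sites, one site $S'$ blocks along a sequence $\SweepRadius_n \to r^-$. The difference $\Dist{S'}{I} - \Dist{S}{I}$ is continuous in $\SweepRadius$, negative along this sequence, and nonnegative at $\SweepRadius = r$ (where $I \in \Set{B}$); continuity therefore forces $\Dist{S'}{I} = \Dist{S}{I} = \Dist{T}{I}$ at $\SweepRadius = r$.

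At that radius the three sites $S, T, S'$ lie on the arc of the circle $\Circle{I}$ of radius $\Dist{S}{I}$ centered at $I$. Because $I \in \Set{B}$, Lemma~\ref{lem:active-means-closer} guarantees that no site of $\SitesInSweep$ lies strictly inside $\Circle{I}$; and since the triangle inequality gives $r(Q) \le r(I) + \Dist{S}{I} = \SweepRadius$ for every $Q \in \Circle{I}$, the disk bounded by $\Circle{I}$ is contained in the sweep disk, so no unseen site can lie inside it either. Thus $\Circle{I}$ is an empty circle carrying at least three sites, and by the characterization in Section~\ref{sec:prelim-voronoi} its center $V = I$ is a Voronoi vertex. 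The same triangle-inequality bound, attained precisely when $O$, $I$, and $Q$ are collinear, identifies the far point $V'$ and yields $r(V') = r(I) + \Dist{S}{I} = r$. Hence $r = r(V')$ is the radius of a true circle event and lies in $\Set{\bar{Q}}$, contradicting $r \in (r_i, r_j)$.

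I expect the main obstacle to be the continuity argument that isolates the single transitioning site $S'$ and pins the gap $\Dist{S'}{I} - \Dist{S}{I}$ to exactly zero at $\SweepRadius = r$: one must rule out that $I$ could gain activity by any mechanism other than a strictly-closer site losing its advantage, which relies crucially on the set of sites being frozen on $(r_i, r_j)$. Once the equality is in place, recognizing the configuration as an empty witness circle whose far point is reached exactly at $\SweepRadius = r$ is routine, using only the triangle inequality and the far-point characterization of Section~\ref{sec:prelim-voronoi}.
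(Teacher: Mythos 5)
Your proof is correct and follows essentially the same route as the paper: contradiction via Lemma~\ref{lem:active-means-closer}, a blocking site whose distance advantage must vanish by continuity as the intersection moves, and the resulting empty witness circle whose far point forces a true circle event inside $(r_i, r_j)$. Your pigeonhole-plus-continuity argument pins the equidistance at exactly $\SweepRadius = r$ and derives emptiness directly from $I \in \Set{B}$ at that radius, which is a slightly cleaner execution of the same idea than the paper's choice of the ``longest-active'' blocking site combined with Lemma~\ref{lem:distance-continuous}.
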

\begin{proof}
  Let $S, T$ be two sites and let
  $I \in \Intersection{\Ellipse{S}}{\Ellipse{T}}$ be an intersection
  of their ellipses.  For the sake of contradiction, assume that there
  exists an $r \in (r_i, r_j)$ such that $I$ enters the beach curve at
  $r$.  Note that no site event occurs in $(r_i, r_j)$, since this
  would contradict the construction of $\Set{\bar{Q}}$.  Consequently,
  if $I$ is on the beach curve at $\SweepRadius = r$, we know that $I$
  is contained in the sweep circle for all
  $\SweepRadius \in (r_i, r_j)$.  By
  Lemma~\ref{lem:active-means-closer} we know that $I$ not being on
  the beach curve before $\SweepRadius = r$ implies the existence of
  at least one site $S' \neq S, T$, such that
  $\Dist{S'}{I} < \Dist{S}{I} = \Dist{T}{I}$.  In particular, we
  choose $S'$ to be the one that remains active the longest at angular
  coordinate $\varphi(I)$ as the sweep circle expands beyond $r_i$.
  Since $I$ is on the beach curve at $\SweepRadius = r$, we also know
  that $\Dist{S'}{I} \ge \Dist{S}{I} = \Dist{T}{I}$ at that moment
  (again Lemma~\ref{lem:active-means-closer}).  Thus, as $I$ moves
  along the bisector $\Bisector{S}{T}$, there exists an
  $r' \in (r_i, r]$ such that
  $\Dist{S'}{I} = \Dist{S}{I} = \Dist{T}{I}$
  (Lemma~\ref{lem:distance-continuous}).  Then, $I$ is the center of
  an empty circle (as otherwise there would be yet another site that
  is longer active than $S'$, contradicting the choice of $S'$) and
  thus lies on a Voronoi vertex $V \in \Set{V}$.  However, this would
  imply that $r' \in (r_i, r_j)$ is the radius of the far point of
  $V$, which again contradicts the construction of $\Set{\bar{Q}}$.
\end{proof}

Note that if no intersections enter the beach curve between events in
$\Set{\bar{Q}}$, then also no beach ellipse segments can become active
then.  Moreover, no segments become active during a circle event
either, as only two intersections are merged into one there.  We can
conclude the following lemma, which is the hyperbolic sweep circle
counterpart of~\cite[Lemma 7.6]{bcko-cg-08} in the Euclidean sweep
line version.

\begin{lemma}
  \label{lem:segments-appear-through-sites}
  The only way in which a new beach ellipse segment can become active
  is through a site event.
\end{lemma}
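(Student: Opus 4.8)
The plan is to reduce the statement to the two facts highlighted just before it: that no intersection enters the beach curve strictly between consecutive events of $\Set{\bar{Q}}$ (Lemma~\ref{lem:intersections-enter}), and that a true circle event only merges intersections rather than creating segments. First I would make precise the link between segments and intersections: a newly active beach ellipse segment is a maximal active arc of some $\Ellipse{S}$, and hence is delimited by intersections of $\Ellipse{S}$ with the neighbouring active segments. At the radius where this arc first becomes active, at least one of its delimiting points transitions from lying strictly inside another beach ellipse (so that, by Lemma~\ref{lem:active-means-closer}, it is not on $\Set{B}$) to lying on $\Set{B}$; by definition this is precisely an intersection \emph{entering} the beach curve. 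Thus the appearance of a new active segment always forces an intersection to enter $\Set{B}$ at the same radius.

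With this link established, Lemma~\ref{lem:intersections-enter} immediately rules out the birth of a new active segment at any radius strictly between two consecutive events of $\Set{\bar{Q}}$, since no intersection enters there. It therefore remains to inspect the events of $\Set{\bar{Q}}$ themselves, which by construction are exactly the site events and the true circle events. At a true circle event the only change to the beach curve is that the two merging intersections are replaced by a single intersection of the two surviving active segments; the segment of the middle site vanishes, while the two neighbouring segments were already active beforehand. Hence, although one intersection does enter $\Set{B}$ at a circle event, it merely joins two pre-existing active segments and starts no new one. The only remaining possibility is a site event, where the degenerate ellipse $\Ellipse{S} = \LineSegment{O}{S}$ of the freshly encountered site $S$ appears and introduces a new active segment, which establishes the claim.

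The step I expect to require the most care is the first one: rigorously arguing that a segment becoming active forces an intersection to enter the beach curve at exactly that radius, rather than the transition being smeared across an interval. Here I would lean on the continuity and periodicity of the radius functions $\EllipseFunction{S}{\varphi}$ from Lemma~\ref{lem:ellipse-curve} and on the characterisation of activity via distances in Lemma~\ref{lem:active-means-closer}, matching up the definition of an \emph{active segment} with that of an intersection \emph{entering the beach curve}. The circle-event case also warrants attention, since an intersection does enter there; the point to stress is that this intersection continues two already-active segments and so does not count as a new one.
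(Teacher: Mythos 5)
Your proposal is correct and follows essentially the same route as the paper: the paper likewise derives this lemma from Lemma~\ref{lem:intersections-enter} (no intersection enters the beach curve between consecutive events of $\Set{\bar{Q}}$) together with the observation that a circle event merely merges two intersections into one and hence activates no new segment, leaving site events as the only possibility. Your additional care in making explicit that a newly active segment forces a delimiting intersection to enter $\Set{B}$ at the same radius is a reasonable elaboration of a step the paper states without proof, but it does not change the argument.
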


We continue by investigating how beach ellipse segments disappear from
the beach curve.  Analogous to the proof of
Lemma~\ref{lem:intersections-enter}, we can prove that no beach
ellipse intersection leaves the beach curve between two events in
$\Set{\bar{Q}}$.

\begin{lemma}
  \label{lem:intersections-leave}
  Let $\Set{\bar{Q}}$ be the tuple of site events and true circle
  events and let $r_i, r_j \in \Set{\bar{Q}}$ be two consecutive
  events.  Then, for all $\SweepRadius \in (r_i, r_j)$ no beach
  ellipse intersection leaves the beach~curve.
\end{lemma}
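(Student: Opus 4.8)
The plan is to mirror the proof of Lemma~\ref{lem:intersections-enter} with the direction of time reversed. I assume for contradiction that there is a radius $r \in (r_i, r_j)$ at which some intersection $I \in \Intersection{\Ellipse{S}}{\Ellipse{T}}$ leaves the beach curve. By the definition of leaving, $I \in \Set{B}$ at $\SweepRadius = r$ while $I \notin \Set{B}$ for all $\SweepRadius \in (r, r + \varepsilon]$ for some $\varepsilon > 0$. Since no site event lies in $(r_i, r_j)$, the sites $S$ and $T$, as well as any competing site, remain inside the sweep circle throughout the interval, so Lemma~\ref{lem:active-means-closer} applies at every radius under consideration.

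Next I would compare the two regimes through Lemma~\ref{lem:active-means-closer}. At $\SweepRadius = r$, the fact that $I \in \Set{B}$ gives $\Dist{S'}{I} \ge \Dist{S}{I} = \Dist{T}{I}$ for every site $S' \neq S, T$. For $\SweepRadius \in (r, r + \varepsilon]$, the fact that $I \notin \Set{B}$ produces a site $S' \neq S, T$ with $\Dist{S'}{I} < \Dist{S}{I} = \Dist{T}{I}$, and I choose $S'$ to be the site that \emph{first} becomes strictly closer to $I$ than $S$ and $T$ as $\SweepRadius$ increases past $r$. Since $I$ moves continuously along $\Bisector{S}{T}$ (Lemma~\ref{lem:intersections-move}) and the involved distances vary continuously with the sweep radius, the quantity $\Dist{S'}{I} - \Dist{S}{I}$ passes from nonnegative at $r$ to negative just after; applying Lemma~\ref{lem:distance-continuous} along the bisector then yields a radius $r' \in [r, r + \varepsilon]$ at which $\Dist{S'}{I} = \Dist{S}{I} = \Dist{T}{I}$.

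At $\SweepRadius = r'$ the point $I$ is equidistant to the three sites $S$, $T$, $S'$, and since it lies on $\Ellipse{S}$ this common distance equals $r' - r(I)$, so $I$ is the center of a circle through $S, T, S'$ that is internally tangent to, and therefore contained in, the sweep circle. This circle is empty: were a fourth site strictly inside it at $\SweepRadius = r'$, that site would be strictly closer to $I$ and would have overtaken $S$ and $T$ before $S'$ did, contradicting the choice of $S'$. Hence $I$ is a Voronoi vertex $V \in \Set{V}$ whose far point satisfies $r(V') = r'$. Choosing $\varepsilon$ small enough that $r + \varepsilon < r_j$ forces $r' \in (r_i, r_j)$, contradicting the assumption that $r_i$ and $r_j$ are consecutive in $\Set{\bar{Q}}$.

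I expect the main obstacle to be precisely the two points where this argument departs from the enter version. First, the witness site $S'$ must be chosen as the \emph{first} site to overtake $S$ and $T$ after $r$ rather than the one active longest beforehand; this reversed choice is what guarantees emptiness of the discovered circle and hence that $I$ is a genuine Voronoi vertex. Second, the intermediate-value radius $r'$ now lands in $[r, r + \varepsilon]$ instead of $(r_i, r]$, so I must shrink $\varepsilon$ to keep $r'$ strictly inside $(r_i, r_j)$. Both adaptations are routine once the time reversal is tracked carefully, which is why the statement follows by analogy with Lemma~\ref{lem:intersections-enter}.
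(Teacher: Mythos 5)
Your proposal matches the paper's own proof essentially step for step: the same contradiction setup, the same choice of the witness site as the one that first overtakes $S$ and $T$ after $r$, the same intermediate-value argument via Lemma~\ref{lem:distance-continuous} to locate $r' \in [r, r+\varepsilon)$ with three equidistant sites, and the same conclusion that $r'$ would be the radius of a far point inside $(r_i, r_j)$, contradicting the construction of $\Set{\bar{Q}}$. Your explicit remark about shrinking $\varepsilon$ to keep $r'$ strictly inside the interval is a minor tidying of a detail the paper leaves implicit, not a different route.
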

\begin{proof}
  Consider two sites $S, T$, let
  $I \in \Intersection{\Ellipse{S}}{\Ellipse{T}}$ be an intersection
  of their ellipses, and assume for the sake of contradiction that
  there exists an $r \in (r_i, r_j)$ such that $I$ leaves the beach
  curve at $r$.  That is, there exists an $\varepsilon > 0$, such that
  $I$ is on the beach curve until $\SweepRadius = r$ and is no longer
  on the beach curve for $\SweepRadius \in (r, r + \varepsilon)$.  By
  Lemma~\ref{lem:active-means-closer}, it follows that for all sites
  $S' \neq S, T$ we have $\Dist{S'}{I} \ge \Dist{S}{I} = \Dist{T}{I}$
  at $\SweepRadius = r$, but at least one of them is closer to $I$
  than $\Dist{S}{I} = \Dist{T}{I}$ afterwards.  Let $S''$ be the one
  for which this happens first.  That is, for this site we have
  $\Dist{S''}{I} \ge \Dist{S}{I} = \Dist{T}{I}$ at $\SweepRadius = r$
  and $\Dist{S''}{I} < \Dist{S}{I} = \Dist{S}{T}$ afterwards.  By
  Lemma~\ref{lem:distance-continuous}, we know that when $I$ moves
  along the bisector $\Bisector{S}{T}$ as the sweep circle expands,
  there exists a radius $r' \in [r, r + \varepsilon)$ such that
  $\Dist{S''}{I} = \Dist{S}{I} = \Dist{T}{I}$ when
  $\SweepRadius = r'$.  As by the choice of $S''$, no other site is
  closer to $I$ than $S'', S$, and $T$, we know that $I$ is the center
  of an empty circle and thus lies on a Voronoi vertex
  $V \in \Set{V}$.  However, this implies that $r' \in (r_i, r_j)$ is
  the radius of the far point of $V$, which contradicts the
  construction of $\Set{\bar{Q}}$.
\end{proof}

Again, note that if no intersections can leave the beach curve between
events in $\Set{\bar{Q}}$, then also no beach ellipse segment can
become inactive then.  Moreover, no segments become inactive during a
site event, since we only insert two intersections consecutively on
the beach curve there.  As a result, we obtain the following lemma,
which is the hyperbolic equivalent of~\cite[Lemma 7.7]{bcko-cg-08} in
the Euclidean sweep line approach.

\begin{lemma}
  \label{lem:segments-disappear-through-circles}
  The only way in which an active beach ellipse segment can become
  inactive is through a circle event.
\end{lemma}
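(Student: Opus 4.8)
The plan is to mirror the Euclidean argument of \cite[Lemma 7.7]{bcko-cg-08} by combining Lemma~\ref{lem:intersections-leave} with an analysis of the two event types that can alter the beach curve. The starting point is to translate the statement ``an active segment becomes inactive'' into the language of intersections: each active segment of a site $S$ is delimited by exactly two intersections that are consecutive to it in the ordered tuple $\Set{I}$, one with its counterclockwise neighbor and one with its clockwise neighbor. As the sweep circle expands, these bounding intersections move continuously along the respective bisectors (Lemma~\ref{lem:intersections-move}), so the angular interval on which $S$ is active varies continuously. Hence the segment can cease to contribute any point to $\Set{B}$ only when this interval shrinks to a single angle, i.e., when its two bounding intersections coincide and leave the beach curve simultaneously.

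First I would argue that such a disappearance cannot happen strictly between two consecutive events of $\Set{\bar{Q}}$. If a segment of $S$ became inactive at some $\SweepRadius = r$ with $r_i < r < r_j$ for consecutive $r_i, r_j \in \Set{\bar{Q}}$, then at least one of its bounding intersections would leave the beach curve at $r$, directly contradicting Lemma~\ref{lem:intersections-leave}. Consequently, any inactivation must occur at a radius belonging to $\Set{\bar{Q}}$, that is, at a site event or at a true circle event.

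Second, I would rule out site events. Processing a site event at radius $r(S_{\mathrm{new}})$ only inserts the newly appearing (degenerate) beach ellipse $\Ellipse{S_{\mathrm{new}}}$ together with the two intersections it forms with the single existing active segment covering angle $\varphi(S_{\mathrm{new}})$; these two intersections are placed consecutively in $\Set{I}$ (Section~\ref{sec:event-queue}). This operation strictly adds intersections and splits at most one existing segment into two, but never merges the bounding intersections of any segment, so no segment becomes inactive at a site event. Since site events are excluded, the only remaining possibility is a true circle event, which by construction removes exactly the active segment of the middle site by merging its two bounding intersections into one. This yields the claim, dual to Lemma~\ref{lem:segments-appear-through-sites}.

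The hard part will be making the first reduction fully rigorous, namely justifying that a segment can vanish only by a merge of its two bounding intersections and not by being ``overtaken'' from its interior by a third site's segment suddenly appearing inside it. I would close this gap by noting that a new segment appearing inside an existing one requires a fresh intersection to enter the beach curve in the interior of that segment, which by Lemma~\ref{lem:intersections-enter} cannot occur strictly between events of $\Set{\bar{Q}}$, and at the events themselves is accounted for by Lemma~\ref{lem:segments-appear-through-sites}. A secondary technical point is the circular wraparound of $\Set{I}$: a segment straddling angular coordinate $0$ is still delimited by two intersections, and structure events merely relabel their positions in $\Set{I}$ without changing the geometry, so the continuity argument for its active interval is unaffected.
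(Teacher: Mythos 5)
Your proposal is correct and follows essentially the same route as the paper, which likewise derives the lemma from Lemma~\ref{lem:intersections-leave} (no intersection leaves the beach curve strictly between events of $\Set{\bar{Q}}$) combined with the observation that a site event only inserts two consecutive intersections and hence cannot deactivate a segment. Your additional care about a segment being ``overtaken from its interior'' and about the circular wraparound goes slightly beyond the paper's two-sentence justification but does not change the argument.
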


\subsubsection{Voronoi Vertices}

It remains to prove that all Voronoi vertices are actually found by
means of circle events.  The following lemma is the analog
of~\cite[Lemma 7.8]{bcko-cg-08} in the Euclidean version.  To simplify
the following proof, we assume that at most three sites are incident
to a Voronoi vertex $V$.  If this does not hold, the algorithm may
produce duplicate Voronoi vertices, which need to be merged in a
post-processing step.

\begin{lemma}
  \label{lem:vertices-through-circle-events}
  Every Voronoi vertex is detected by means of a circle event.
\end{lemma}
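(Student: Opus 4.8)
The plan is to mirror the Euclidean argument of~\cite[Lemma 7.8]{bcko-cg-08}: I would show that just before the sweep circle reaches the far point $V'$, the active segments of the three sites incident to $V$ appear as three consecutive arcs on the beach curve $\Set{B}$, with one of them in the middle, so that a true circle event merges the two bounding intersections precisely at $V$. By the assumption that at most three sites are incident to $V$, its incidence tuple is exactly $(S_1, S_2, S_3)$; write $d = \Dist{V}{S_1} = \Dist{V}{S_2} = \Dist{V}{S_3}$ for the radius of the witness circle $\Circle{V}$. Since, by the hyperbolic triangle inequality, the point of $\Circle{V}$ farthest from the pole lies on $\Ray{O}{V}$, we have $r(V') = r(V) + d$. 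Hence at $\SweepRadius = r(V')$ the distance from $V$ to the sweep circle equals $d$, so $V$ lies simultaneously on $\Ellipse{S_1}$, $\Ellipse{S_2}$, and $\Ellipse{S_3}$.

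First I would pin down the two relevant intersections. For $\SweepRadius < r(V')$, let $I_{12} \in \Intersection{\Ellipse{S_1}}{\Ellipse{S_2}}$ and $I_{23} \in \Intersection{\Ellipse{S_2}}{\Ellipse{S_3}}$ be the intersections that coincide with $V$ at $\SweepRadius = r(V')$; Lemma~\ref{lem:intersection-direction} guarantees that they exist and approach $V$ along the respective angular bisectors. Corollary~\ref{cor:third-site-distinct} gives $\Dist{I_{12}}{S_3} > \Dist{I_{12}}{S_1} = \Dist{I_{12}}{S_2}$ and $\Dist{I_{23}}{S_1} > \Dist{I_{23}}{S_2} = \Dist{I_{23}}{S_3}$, so the third incident site is never the closest. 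To apply Lemma~\ref{lem:active-means-closer} and deduce that $I_{12}$ and $I_{23}$ are active, it then remains to rule out the non-incident sites. Since $\Circle{V}$ is empty, every other site $S'''$ satisfies $\Dist{V}{S'''} > d$, and since both intersections depend continuously on $\SweepRadius$ and equal $V$ at $\SweepRadius = r(V')$, there is an $\varepsilon > 0$---obtained as a minimum over the finitely many sites---such that $\Dist{S'''}{I_{12}} > \Dist{S_1}{I_{12}}$ and $\Dist{S'''}{I_{23}} > \Dist{S_2}{I_{23}}$ for all $\SweepRadius \in [r(V') - \varepsilon, r(V'))$. Thus both intersections lie on $\Set{B}$ throughout this interval.

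Next I would show that $I_{12}$ and $I_{23}$ are consecutive on $\Set{B}$ and bound an active segment of $S_2$. By the emptiness argument above, the only sites that can be active at angular coordinates near $\varphi(V)$ are $S_1, S_2, S_3$, so any interruption of the $S_2$-segment between $I_{12}$ and $I_{23}$ would be caused by $S_1$ or $S_3$ and would require a second intersection of $\Ellipse{S_2}$ with $\Ellipse{S_1}$ or $\Ellipse{S_3}$ in this angular range. Lemma~\ref{lem:ellipse-intersections} rules this out, since each such pair of non-degenerate ellipses meets in exactly two points, only one of which tends to $V$. Consequently $S_2$ is active on the whole angular interval bounded by $I_{12}$ and $I_{23}$, the three incident sites form three consecutive arcs with $S_2$ in the middle, and a circle event for this triple is scheduled.

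Finally I would conclude. As $\SweepRadius \to r(V')$ both $I_{12}$ and $I_{23}$ move to $V$ (Lemma~\ref{lem:intersection-direction}), so the middle segment of $S_2$ vanishes exactly at $\SweepRadius = r(V')$; by Lemma~\ref{lem:segments-disappear-through-circles} this can only happen through a circle event. Because no site lies inside $\Circle{V}$, the event is true (Lemma~\ref{lem:true-false-circle-detection}), its associated point is $V$ itself, and $V$ is thereby detected and added to the diagram. I expect the consecutiveness step of the third paragraph to be the main obstacle: making it rigorous that no other active segment intrudes between $I_{12}$ and $I_{23}$ just below $r(V')$ is where the emptiness of $\Circle{V}$, the continuity of the intersections, and the exactly-two-intersections bound of Lemma~\ref{lem:ellipse-intersections} must be combined under a single uniform $\varepsilon$.
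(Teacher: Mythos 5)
Your overall strategy matches the paper's: identify the two intersections $I_{12}$ and $I_{23}$ that converge to $V$, show they are active (ruling out $S_3$ via Corollary~\ref{cor:third-site-distinct} and other sites via the emptiness of the witness circle), show they are consecutive on $\Set{B}$, and conclude that the resulting circle event detects $V$. However, there is a genuine gap in how you connect the geometric configuration to what the algorithm actually does. Your activity and consecutiveness arguments are purely local: you produce an $\varepsilon > 0$ such that the configuration holds on $[r(V') - \varepsilon, r(V'))$. But the algorithm only schedules circle events at the discrete moments when it processes an event and updates $\Set{I}$; if no event of $\Set{\bar{Q}}$ falls inside your interval, the fact that $I_{12}$ and $I_{23}$ become consecutive "just below" $r(V')$ does not imply the corresponding circle event was ever placed in the queue. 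This is precisely why the paper fixes $r$ to be the \emph{predecessor} of $r(V')$ in $\Set{\bar{Q}}$ and proves the configuration holds on all of $[r, r(V'))$, leaning on Lemmas~\ref{lem:intersections-enter} and~\ref{lem:intersections-leave} (no intersection enters or leaves $\Set{B}$ strictly between events) to derive contradictions. Your argument can be repaired by adding exactly this step --- since the beach curve cannot change on $(r, r(V'))$, a configuration present just below $r(V')$ must already be present at $\SweepRadius = r$, where the algorithm schedules the event --- but as written the link to the scheduling mechanism is missing.

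A secondary, smaller concern is your consecutiveness argument. You reason about which sites can be active in the angular range between $I_{12}$ and $I_{23}$ and invoke Lemma~\ref{lem:ellipse-intersections} to exclude a second intersection of $\Ellipse{S_2}$ with $\Ellipse{S_1}$ or $\Ellipse{S_3}$ "in this angular range", on the grounds that only one intersection of each pair tends to $V$. That the other intersection is far from $V$ along the bisector does not immediately place its \emph{angular} coordinate outside the (shrinking) range between $\varphi(I_{12})$ and $\varphi(I_{23})$, so this step needs more care. The paper avoids the issue by arguing structurally: if an active segment intruded, its two bounding intersections would persist until $\SweepRadius = r(V')$ by Lemma~\ref{lem:intersections-leave} and would then have to coincide with $V$, forcing a fourth site onto the witness circle --- a contradiction with the non-degeneracy assumption. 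Adopting that collapse-and-count argument would make your third paragraph rigorous without any uniform-$\varepsilon$ bookkeeping.
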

\begin{proof}
  Let $V \in \Set{V}$ be a Voronoi vertex, let $V'$ be its far point,
  and let $(S_1, S_2, S_3)$ be the incidence tuple of $V$.  Further,
  let $r \in \Set{\bar{Q}}$ be the predecessor of $r(V')$ in
  $\Set{\bar{Q}}$.  We prove that for all sweep circle radii
  $\SweepRadius \in [r, r(V'))$ there are beach ellipse intersections
  $I_{12} \in \Intersection{\Ellipse{S_1}}{\Ellipse{S_2}}$ and
  $I_{23} \in \Intersection{\Ellipse{S_2}}{\Ellipse{S_3}}$ that are
  consecutive on the beach curve $\Set{B}$.  Then, it follows that the
  corresponding circle event is in $\Set{Q}$ and the Voronoi vertex
  $V$ is detected at $\SweepRadius = r(V')$.

  We start by showing that $I_{12}, I_{23} \in \Set{B}$ for
  $\SweepRadius \in [r, r(V'))$.  In particular, we give the proof for
  $I_{12}$, as the one for $I_{23}$ is analogous.  For the sake of
  contradiction, assume that there exists an $r' \in [r, r(V'))$ such
  that $I_{12} \notin \Set{B}$ when $\SweepRadius = r'$.  By
  Lemma~\ref{lem:ellipse-intersections}, $I_{12}$ exists, so the only
  way for it \emph{not} to be on the beach curve is that there exists
  another site $S \neq S_1, S_2$ such that
  $\Dist{S}{I_{12}} < \Dist{S_1}{I_{12}} = \Dist{S_2}{I_{12}}$
  (Lemma~\ref{lem:active-means-closer}).  Moreover, by
  Corollary~\ref{cor:third-site-distinct} we have
  $\Dist{S_3}{I_{12}} > \Dist{S_1}{I_{12}} = \Dist{S_1}{I_{12}}$.  It
  follows that $S \neq S_3$.  We now show that the site $S$ that is
  distinct from $S_1, S_2$, and $S_3$ cannot exist.

  By Lemma~\ref{lem:intersections-enter} we know that $I_{12}$ does
  not enter the beach curve until at least $\SweepRadius = r(V')$.
  Now first consider the case where $I_{12}$ enters the beach curve
  exactly at $\SweepRadius = r(V')$, i.e., exactly when $I_{12} = V$.
  This means that
  $\Dist{S}{I_{12}} \ge \Dist{S_1}{I_{12}} = \Dist{S_2}{I_{12}}$ at
  this point (Lemma~\ref{lem:active-means-closer}).  In particular, we
  have $\Dist{S}{I_{12}} = \Dist{S_1}{I_{12}} = \Dist{S_2}{I_{12}}$
  then, as by Lemma~\ref{lem:distance-continuous} there exists a point
  at which we have equality but this point does not occur before
  $\SweepRadius = r(V')$.  It follows that besides $S_1, S_2$, and
  $S_3$, the site $S$ is on the witness circle of $V$, contradicting
  our assumption that no more than three sites do.  Now consider the
  case where $I_{12}$ does not enter the beach curve at
  $\SweepRadius = r(V')$, i.e., when $I_{12} = V$.  By
  Lemma~\ref{lem:active-means-closer}, we know that
  $\Dist{S}{I_{12}} < \Dist{S_1}{I_{12}}$, i.e,
  $\Dist{S}{V} < \Dist{S_1}{V}$, which means that $S$ is contained in
  the witness circle of $V$, contradicting the assumption that $V$ is
  a Voronoi vertex.  Since both cases lead to a contradiction, we can
  conclude that $I_{12} \in \Set{B}$ for all
  $\SweepRadius \in [r, r(V'))$.

  It remains to show that $I_{12}$ and $I_{23}$ are also consecutive
  on $\Set{B}$ for $\SweepRadius \in [r, r(V'))$.  First note that any
  intersection that were to lie between $I_{12}$ and $I_{23}$ cannot
  span beyond these two intersections, as this would contradict the
  fact that $I_{12}, I_{23} \in \Set{B}$, which we just proved.  It
  follows that if $I_{12}, I_{23}$ are not consecutive on $\Set{B}$,
  then there are at least two intersections $I$ and $I'$ between them
  that belong to the same active segment $\Set{A}$, which is part of
  the beach ellipse of another site $S$.  Since none of the
  intersections $I_{12}, I_{23}, I$, and $I'$ leave $\Set{B}$ until at
  least $\SweepRadius = r(V')$ (Lemma~\ref{lem:intersections-leave}),
  it follows that $I$ and $I'$ stay between $I_{12}$ and $I_{23}$
  until $\SweepRadius = r(V')$, which is when $I_{12}$ and $I_{23}$
  meet at $V$.  Then, $V = I_{12} = I_{23} = I = I'$.  Now note that
  only three intersections of the beach ellipses of the sites
  $S_1, S_2$, and $S_3$ meet at $V$, since the two beach ellipse
  intersections of a pair of them travel in opposite directions on the
  perpendicular bisector, only one of which leads to $V$
  (Lemma~\ref{lem:intersections-move}).  Thus, at least one of $I$ and
  $I'$ belongs to a beach ellipse $\Ellipse{S}$ with
  $S \neq S_1, S_2, S_3$.  It follows that a fourth site lies on the
  witness circle of $V$, contradicting the assumption that no more
  than three do.
\end{proof}

\subsection{Complexity}

To conclude the proof of Theorem~\ref{thm:main}, it remains to show
that the algorithm takes time $\mathcal{O}(n \log(n))$ to compute the
Voronoi diagram of $n$ sites.  Initially, all site events need to be
scheduled, meaning the sites have to be sorted by their radii, which
takes time $\mathcal{O}(n \log(n))$.  The running time of the
remainder of the algorithm then depends on the complexity of the
diagram, i.e., the number of Voronoi vertices and edges.  It was
previously shown that this complexity is $\mathcal{O}(n)$, by
examining different models: the Poincare disk model ~\cite[Consequence
of Proposition 2]{Bogdanov_Devillers_Teillaud_2014}, the Poincaré
half-plane model~\cite[Theorem 18.5.1]{by-ne-98} and the Klein disk
model~\cite[Theorem 1]{Nielsen_Nock_2010}.  Of course, it is no
surprise that all came to the same conclusion, since the different
models represent different ways to address points in the same space.

Clearly, there are exactly $n$ site events and the number of true
circle events is bounded by the number of Voronoi vertices, which is
$\mathcal{O}(n)$.  As each event is processed, at most a constant
number of circle events are scheduled and as the algorithm proceeds
the number of canceled events cannot be larger than the scheduled
ones.  Moreover, since each intersection can contribute at most one
structure event, it follows that the total number of processed events
is $\mathcal{O}(n)$.

It remains to show that we can handle an event in time
$\mathcal{O}(\log(n))$.  Since the queue contains $\mathcal{O}(n)$
events, inserting and removing elements from the queue, takes time
$\mathcal{O}(\log(n))$.  Regarding updating the beach curve data
structure, recall that two beach ellipses intersect at most two times
(Lemma~\ref{lem:ellipse-intersections}) and note that, consequently,
at most two intersections are on the same edge of the Voronoi diagram
at all times.  It follows that the number of elements in the beach
curve is at most $\mathcal{O}(n)$ at all times, meaning insertions and
deletions take at most $\mathcal{O}(\log(n))$ time.  All other
operations, like inserting vertices to the diagram, marking bisectors
incident to the vertices, and predicting new circle and structure
events, take constant time.

\end{document}